\def\hsymbu#1{\smash{\lower1.7ex\hbox{\huge$#1$}}}
\newcommand{\bbeta}{\bm{\beta}}
 \newcommand{\mcb}{\mathcal{B}} 
\newcommand{\mcf}{\mathcal{F}} \newcommand{\mci}{\mathcal{I}} \newcommand{\mcl}{\mathcal{L}} 
  \newcommand{\mcx}{\mathcal{X}}
\newcommand{\mbbn}{\mathbb{N}}
\newcommand{\mbbr}{\mathbb{R}}
\newcommand{\al}{\alpha} \newcommand{\lam}{\lambda}  
\newcommand{\ve}{\varepsilon} 
\newcommand{\del}{\delta}
\newcommand{\sig}{\sigma}
\newcommand{\D}{\Delta} 
\newcommand{\gam}{\gamma}
\newcommand{\p}{\partial}
\newcommand{\cil}{\xrightarrow{\mcl}} % <- Convergence in law
\newcommand{\cip}{\xrightarrow{p}} % <- Convergence in probability
\newcommand{\ucip}{\stackrel{p}{\rightrightarrows}} % <- Convergence in probability
\newcommand{\argmin}{\mathop{\rm argmin}}
\newcommand{\var}{{\rm var}}
\newcommand{\cov}{{\rm cov}}
\newcommand{\sumi}{\sum_{i=1}^{n}}
\newcommand{\bhg}{\hat{\bm{\beta}}_{\gam}}
\newcommand{\nn}{\nonumber}
\def\eqnarray{\stepcounter {equation}\let \@currentlabel =\theequation
\global \@eqnswtrue
\global \@eqcnt \z@ \tabskip \@centering \let \\=\@eqncr
$$\halign to \displaywidth \bgroup \@eqnsel \hskip \@centering
$\displaystyle \tabskip \z@ {##}$&\global \@eqcnt \@ne \hfil
${\mbox{}##\mbox{}}$\hfil &\global \@eqcnt \tw@
$\displaystyle \tabskip \z@ {##}$\hfil \tabskip \@centering
&\llap {##}\tabskip \z@ \cr}
\theoremstyle{plain}
\newtheorem{lemma}{Lemma}[section]
\theoremstyle{remark}
\newtheorem{remark}{Remark}[section]
\theoremstyle{example}
\theoremstyle{lemma}
\newtheorem{Assumption}{Assumption}%[section]
\theoremstyle{Theorem}
\newtheorem{Theorem}{Theorem}[section]
\begin{document}

{%\baselineskip 8mm
\begin{center}
\textbf{\Large Robust relative error estimation}
\end{center}
\begin{center}
\large {Kei Hirose $^{1,3}$ and Hiroki Masuda $^{2}$ %\footnote[1]{Present address:  Toyama chemical Co., Ltd. 3-2-5, Nishi-Shinjuku, Shinjuku-ku, Tokyo 160-0023, Japan.} 
}
\end{center}

\begin{flushleft}
{\footnotesize
$^1$ Institute of Mathematics for Industry, Kyushu University, 744 Motooka, Nishi-ku, Fukuoka 819-0395, Japan \\

%\vspace{1.2mm}

$^2$ Faculty of Mathematics, Kyushu University, 744 Motooka, Nishi-ku, Fukuoka 819-0395, Japan \\

\vspace{1.2mm}

$^3$ RIKEN Center for Advanced Intelligence Project, 1-4-1 Nihonbashi, Chuo-ku, Tokyo 103-0027, Japan \\
}
{\it {\small E-mail: hirose@imi.kyushu-u.ac.jp, hiroki@math.kyushu-u.ac.jp}}	
\end{flushleft}

\vspace{1.5mm}

\begin{abstract}
Relative error estimation has been recently used in regression analysis.  A crucial issue of the existing relative  error estimation procedures is that they are sensitive to outliers.  To address this issue, we employ the $\gamma$-likelihood function, which is constructed through $\gamma$-cross entropy with keeping the original statistical model in use.   The estimating equation has a redescending property,  a desirable property in robust statistics, for a broad class of noise distributions.  To find a minimizer of the negative $\gamma$-likelihood function, a majorize-minimization (MM) algorithm is constructed.  The proposed algorithm is guaranteed to decrease the negative $\gamma$-likelihood function at each iteration.
We also derive asymptotic normality of the corresponding estimator together with a simple consistent estimator of the asymptotic covariance matrix, so that we can readily construct approximate confidence sets.
Monte Carlo simulation is conducted to investigate the effectiveness of the proposed procedure.  Real data analysis illustrates the usefulness of our proposed procedure.
 \end{abstract}
 \noindent {\bf Key Words}: $\gamma$-divergence; relative error estimation; robust estimation

 \section{Introduction}

In regression analysis, many analysts use the (penalized) least squares estimation, which aims at minimizing the mean squared prediction error \citep{Hastie:2009fg}.  On the other hand, the relative (percentage) error is often more useful and/or adequate than the mean squared error.
For example, in econometrics, the comparison of  prediction performance between different stock prices with different units should be made by relative error; we refer to \citep{Park:1998bk} and \citep{Ye:2007fz} among others.  Additionally, the prediction error of photovoltaic power production or electricity consumption is evaluated by not only mean squared error but also relative error (see, e.g., \citep{vanderMeer:2018bv}).
We refer to \citep{Mou16} regarding the usefulness and importance of the relative error.

In relative error estimation, we minimize a loss function based on the relative error.  An advantage of using such a loss function is that it is scale free or unit free.  Recently, several researchers have proposed various loss functions based on relative error \cite{Park:1998bk,Ye:2007fz,Chen:2010ic,Li:2013cm,Chen:2016ega,Ding:2017is}.  Some of these procedures have been extended to the nonparameteric model \citep{Demongeot:2016in} and  random effect model \citep{Wang:2018hn}.  The relative error estimation via the $L_1$ regularization, including the least absolute shrinkage and operator (lasso; \citep{tibshirani1996regression}), and the group lasso \citep{Yuan:2006jy}, have also been proposed by several authors \citep{Hao:2016ei,Liu:2016jy,Xia:2016bw}, to allow for the analysis of high-dimensional data.  

In practice, a response variable $y (> 0)$ can turn out to be extremely large or close to zero.  For example, the electricity consumption of a company may be low during holidays and high on exceptionally hot days. These responses may often be considered to be outliers, to which the relative error estimator is sensitive because the loss function diverges when $y \rightarrow \infty$ or $y \rightarrow 0$.  Therefore, a relative error estimation that is robust against outliers must be considered. Recently, \citet{Chen:2016ega} discussed the robustness of various relative error estimation procedures by investigating the  corresponding distributions,  and concluded that the distribution of least product relative error estimation (LPRE) proposed by \citep{Chen:2016ega} has heavier tails than others, implying that the LPRE might be more robust than others in practical applications.  However, our numerical experiments show that the LPRE is not as robust as expected, so that the robustification of the LPRE is yet to be investigated from the both theoretical and practical viewpoints.

To achieve a relative error estimation that is robust against outliers, this paper employs the $\gamma$-likelihood function for regression analysis by \citet{Kawashima:2017ho}, which is constructed by the $\gamma$-cross entropy \citep{Fujisawa:2008dq}.   The estimating equation is shown to have a redescending property, a desirable property in robust statistics literature \citep{maronna2006robust}. To find a minimizer of the negative $\gamma$-likelihood function, we construct a majorize-minimization (MM) algorithm. The loss function of our algorithm at each iteration is shown to be convex, although the original negative $\gamma$-likelihood function is nonconvex.   Our algorithm is guaranteed to decrease the objective function at each iteration.    
Moreover, we derive the asymptotic normality of the corresponding estimator together with a simple consistent estimator of the asymptotic covariance matrix, which enables us to straightforwardly create approximate confidence sets.  
Monte Carlo simulation is conducted to investigate the performance of our proposed procedure.  An  analysis of electricity consumption data is presented to illustrate the usefulness of our procedure.
 
The reminder of this paper is organized as follows: Section \ref{hk:sec_re.est} reviews several relative error estimation procedures.  In Section \ref{hk:sec_gam-div}, we propose a relative error estimation that is robust against outliers via the $\gamma$-likelihood function.  
Section \ref{hm:sec_theoretical.properties} presents theoretical properties: the redescending property of our method and the asymptotic distribution of the estimator, the proof of the latter being  deferred to Appendix \ref{hm:sec_proof}.
In Section \ref{sec:algorithm}, the MM algorithm is constructed to find the minimizer of the negative $\gamma$-likelihood function.   
Section \ref{sec:simulation} investigates the effectiveness of our proposed procedure via Monte Carlo simulations.  Section \ref{sec:real data analysis} presents the analysis on  electricity consumption data.  Finally, concluding remarks are given in Section \ref{sec:discussion}. 

%%%%%%%%%%%%%%%%%%%%%%%%%%%%%%%%%%%%%%%%%%
\section{Relative error estimation}
\label{hk:sec_re.est}

Suppose that $\bm{x}_i$ = $(x_{i1},\dots,x_{ip})^T$ $(i=1,...,n)$ are predictors and $\bm{y} = (y_1, . . . , y_n )^T$ is a vector of positive responses.
Consider the multiplicative regression model
\begin{equation}
y_i = \exp(\bm{x}_i^T\bm{\beta} )\varepsilon_i = \exp\left( \sum_{j=1}^{p}x_{ij}\beta_j \right)\varepsilon_i,
\quad  \quad (i=1,\dots,n), \label{eq:multiplicative regression model}
\end{equation}
where $\bm{\beta} = (\beta_1,\cdots,\beta_p)^T$ is a $p$-dimensional coefficient vector, and $\varepsilon_i$ are positive random variables.
Predictors $\bm{x}_i\in\mbbr^p$ may be random and serially dependent, while we often set $x_{i1}= 1$, that is, incorporate the intercept in the exponent.
The parameter space $\mcb\subset\mbbr^p$ of $\bm{\beta}$ is a bounded convex domain such that $\bm{\beta}_0\in\mcb$.
We implicitly assume that the model is correctly specified, so that there exists a true parameter $\bm{\beta}_0=(\beta_{1,0},\dots,\beta_{p,0})\in\mcb$. We want to estimate $\bbeta_0$ from a sample $\{(\bm{x}_i,y_i)$, $i=1,\dots,n\}$.

We first remark that the condition $x_{i1}= 1$ ensures that the model \eqref{eq:multiplicative regression model} is scale-free regarding variables $\ve_i$, which is an essentially different nature from the linear regression model $y_i = \bm{x}_i^T\bm{\beta} +\ve_i$.
Specifically, multiplying a positive constant $\sig$ to $\ve_i$ results in the translation of the intercept in the exponent:
\begin{equation}
y_i = \exp(\bm{x}_i^T\bm{\beta} ) \sig\ve_i = \exp(\log \sig + \bm{x}_i^T\bm{\beta} ) \ve_i,
\nonumber
\end{equation}
so that the change from $\ve_i$ to $\sig\ve_i$ is equivalent to that from $\beta_1$ to $\beta_1+\log\sig$.  See Remark \ref{hm:remark_noise} on the distribution of $\ve_1$.

To provide a simple expression of the loss functions based on the relative error, we write
\begin{equation}
t_i = t_i(\bm{\beta}) = \exp(\bm{x}_i^T\bm{\beta}), \quad (i=1,\dots,n).
\nn
\end{equation}
Chen {\it et al.} \cite{Chen:2010ic,Chen:2016ega} pointed out that the loss criterion for relative error may depend on $| (y_i- t_i ) / y_i |$ and / or $| (y_i- t_i ) /t_i |$.  These authors also proposed general relative error (GRE) criteria, defined as
\begin{equation}
	G(\bm{\beta}) = \sum_{i=1}^n g\left(  \left| \frac{y_i- t_i}{y_i}  \right|,  \left|\frac{y_i- t_i}{t_i}\right| \right), \label{eq:GRE}
\end{equation}
where $g: [0,\infty)\times [0,\infty) \to [0,\infty)$.  Most of the loss functions based on the relative error are included in the GRE.   \citet{Park:1998bk} considered a loss function $g (a, b) = a^2$.
It may highly depend on a small $y_i$, because it includes $1/y_i^2$ terms, and then the estimator can be numerically unstable. 
Consistency and asymptotic normality may not be established under general regularity conditions \citep{Chen:2016ega}. 
The loss functions based on $g (a, b) = \max\{a, b\}$ \citep{Ye:2007fz} and $g(a,b) = a + b$ (least absolute relative error estimation, \citep{Chen:2010ic}) can have desirable asymptotic properties \citep{Ye:2007fz,Chen:2010ic}.  However, the minimization of the loss function can be challenging, in particular for high-dimensional data, when the function is nonsmooth or nonconvex.

In practice, the following two criteria would be useful:   
\begin{description}
	\item[Least product relative error estimation (LPRE)] \citet{Chen:2016ega} proposed the LPRE given by $g(a,b) = ab$.  
	The LPRE tries to minimize the product $| 1- t_i  / y_i |\times| 1- y_i  /t_i |$, not necessarily both terms at once.    
	\item[Least squared-sum relative error estimation (LSRE)] \citet{Chen:2016ega} considered the LSRE  given by $g(a,b) = a^2 + b^2$.  The LSRE aims to minimize both $| 1- t_i  / y_i |$ and $| 1- y_i  /t_i |$ through sum of squares $( 1- t_i  / y_i )^2 + ( 1- y_i  /t_i )^2$.
\end{description}
  The loss functions of LPRE and LSRE are smooth and convex, and also possess desirable asymptotic properties \citep{Chen:2016ega}.  The above-described GRE criteria and their properties are summarized in Table \ref{tab:criteria}.  Particularly, the ``convexity'' in the case of $g(a,b)=a+b$ a.s. holds when $\ve_i>0$, $\ve_i\ne 1$, and $\sumi\bm{x}_{i}\bm{x}_i^{T}$ is positive definite, since the Hessian matrix of the corresponding $G(\bbeta)$ is $\sumi |\ve_i-\ve_i^{-1}| \bm{x}_i\bm{x}_i^{T}$ a.s.

\begin{table}[t]
\caption{
Several examples of GRE criteria and their properties.  ``Likelihood'' in the second column means the existence of a likelihood function that corresponds to the loss function.
The properties of ``Convexity'' and ``Smoothness'' in the last two columns respectively indicate those with respect to $\bm{\beta}$ of the corresponding loss function.
}
\label{tab:criteria}
\centering
\begin{tabular}{cccc}
\hline
\textbf{$g(a,b)$} & {Likelihood} & {Convexity} & {Smoothness} \\
\hline
$a^2$		        &			            &               & $\surd$ \\
$a + b$		    &	$\surd$		&       $\surd$        &               \\
$\max\{a,b\}$	&	$\surd$		&               &               \\
$ab$					&  $\surd$       & $\surd$  & $\surd$ \\
$a^2 +  b^2$	&	$\surd$		& $\surd$  & $\surd$ \\
\hline
\end{tabular}
\end{table}

Although not essential, we assume that the variables $\ve_i$ in \eqref{eq:multiplicative regression model} are i.i.d. with common density function $h$.  This implies the conditional distribution $\mcl(y_i|\bm{x}_i)$ is homogeneous in the sense that it does not depend on $i$.
As in \citet{Chen:2016ega}, we consider the following class of $h$ associated with $g$:
\begin{equation}
h(\varepsilon) := \frac{C(g)}{\varepsilon} \exp\left\{ -\rho(\ve) \right\}I_{+}(\varepsilon),
\label{ep.pdf}
\end{equation}
where
\begin{equation}
\rho(\ve)=\rho(\ve;g) := g\left( \left|1-\frac{1}{\varepsilon} \right|,\, |1-\varepsilon|\right)
\nonumber
\end{equation}
and $C(g)$ is a normalizing constant ($\int h(\varepsilon)d\varepsilon=1$) and $I_+$ denotes the indicator function of set $(0,\infty)$.
Further, we assume the symmetry property $g(a,b)=g(b,a)$, $a,b\ge 0$, from which it follows that $\ve_1 \sim \ve_1^{-1}$.
The latter property is necessary for a score function to be associated with the gradient of a GRE loss function, hence being a martingale with respect to a suitable filtration, which often entails estimation efficiency. Indeed, the asymmetry of $g(a,b)$ (i.e., $g(a,b) \neq g(b,a)$) may produce a substantial bias in the estimation \citep{Ye:2007fz}.
The entire set of our regularity conditions will be shown in Section \ref{sec:asymptotics}. The conditions therein concerning $g$ are easily verified for both LPRE and LSRE.

The density function of response $y$ given $\bm{x}_i$, say $f(y|\bm{x}_i;\bm{\beta})$, is 
\begin{eqnarray}
	f(y|\bm{x}_i;\bm{\beta}) &=& \exp(-\bm{x}_i^T\bm{\beta}) h\left(y \exp(-\bm{x}_i^T\bm{\beta})\right) \label{hm:add.eq-1} \\
	&=& \frac{1}{t_i} h\left( \frac{y}{t_i} \right). \label{eq:density}
\end{eqnarray} 
From \eqref{ep.pdf} we see that the maximum likelihood estimator (MLE) based on the error distribution in (\ref{eq:density}) is obtained by the minimization of (\ref{eq:GRE}). 
 For example, the density functions of LPRE and LSRE are  
\begin{eqnarray}
	LPRE: && f(y|\bm{x}_i)  = \frac{1}{2K_0(2)} y^{-1}\exp \left( -\frac{y}{t_i} -\frac{t_i}{y}  \right) \quad  y>0, 
	\label{hm:LPRE_pdf} \\
	LSRE: && f(y|\bm{x}_i)  = C_{LSRE} y^{-1} \exp \left\{ - \left(1-\frac{t_i}{y}\right)^2  -\left(1-\frac{y}{t_i}\right)^2   \right\} \quad  y>0, \nn
	\end{eqnarray}
where $K_{\nu}(z)$ denotes a modified Bessel function of third kind with index $\nu\in\mbbr$:
\begin{eqnarray*}
K_{\nu}(z) = \frac{z^{\nu}}{2^{\nu+1}}\int_{0}^{\infty} t^{-\nu-1} \exp \left( -t-\frac{z^2}{4t}  \right) dt,
\end{eqnarray*}
and $C_{LSRE}$ is a constant term.  Constant terms are numerically computed as $K_0(2) \approx 0.1139$ and $C_{LSRE} \approx 0.911411$.
Density \eqref{hm:LPRE_pdf} is a special case of the generalized inverse Gaussian distribution (see, e.g., \citep{Koudou:2014ja}).

\begin{remark}
\label{hm:remark_noise}
We assume that the noise density $h$ is fully specified in the sense that, given $g$, the density $h$ does not involve any unknown quantity.
However, this is never essential.
For example, for the LPRE defined by \eqref{hm:LPRE_pdf} we could naturally incorporate one more parameter $\sig>0$ into $h$, the resulting form of $h(\ve)$ being
\begin{equation}
\ve \mapsto \frac{1}{2K_0(\sig)} \ve^{-1}\exp\left\{ -\frac{\sig}{2}\left( \ve+\frac{1}{\ve}\right)\right\}I_{+}(\ve).
\nonumber
\end{equation}
Then, we can verify that the distributional equivalence $\ve_1\sim\ve_1^{-1}$ holds whatever the value of $\sig$ is.
Particularly, the estimation of parameter $\sig$ does make statistical sense and, indeed, it is possible to deduce the asymptotic normality of the joint maximum-(partial-)likelihood estimator of $(\bbeta,\sig)$.
In this paper, we do not pay attention to such a possible additional parameter, but instead regard it (whenever it exists) as a nuisance parameter, as in the noise variance in the least-squares estimation of a linear regression model.
\end{remark}

\section{Robust estimation via $\gamma$-likelihood}
\label{hk:sec_gam-div}

In practice, outliers can often be observed. For example, the electricity consumption data can have the outliers on extremely hot days.  The estimation methods via GRE criteria, including LPRE and LSRE, are not robust against outliers, because the corresponding density functions are not generally heavy-tailed.  Therefore, a relative error estimation method that is robust against the outliers is needed.  
To achieve this, we consider minimizing the negative $\gamma$-(partial-)likelihood function based on the $\gam$-cross entropy \citep{Kawashima:2017ho}.

We assume the conditional density of $y_i$ given $\{(\bm{x}_i,\bm{x}_{i-1},\bm{x}_{i-2},\dots),\,(y_{i-1},y_{i-2},\dots)\}$ equals $f(\cdot|\bm{x}_i;\bbeta)$.  Then, we define the negative $\gamma$-(partial-)likelihood function by
\begin{eqnarray}
\ell_{\gamma,n}(\bm{\beta}) = -\frac{1}{\gamma} \log \left\{ \frac{1}{n}\sum_{i=1}^n f(y_i|\bm{x}_i;\bm{\beta})^{\gamma}  \right\} 
+ \frac{1}{1+\gamma} \log \left\{ \frac{1}{n}\sum_{i=1}^n\int_{0}^{\infty} f(y|\bm{x}_i;\bm{\beta})^{1+\gamma}dy \right\},
\label{gamma-div}
\end{eqnarray}
where $\gamma>0$ is a parameter that controls the degrees of robustness;  $\gamma \rightarrow 0$ corresponds to the negative log-likelihood function, and robustness is enhanced as $\gamma$ increases.   On the other hand, a too large $\gamma$ can decrease the efficiency of the estimator \citep{Fujisawa:2008dq}.  
In practice, the value of $\gamma$ may be selected by a cross-validation based on $\gamma$-cross entropy (see, e.g., \citep{JonHjoHarBas01,Fujisawa:2008dq}).  We refer to \citet{KawFuj.arxiv18} for more recent observations on comparison of the $\gam$-divergences between \citet{Fujisawa:2008dq} and \citet{Kawashima:2017ho}.

The integration $\int f(y|\bm{x}_i;\bm{\beta})^{1+\gamma}dy $ in the second term on the right-hand side of (\ref{gamma-div}) is 
\begin{align}
\int_0^\infty f(y|\bm{x}_i;\bm{\beta})^{1+\gamma}dy
&=  \frac{1}{t_i^{1+\gamma}} \int_0^\infty \left\{h \left( \frac{y}{t_i} \right) \right\}^{1+\gamma}dy
=: t_i^{-\gamma} C(\gamma,h),
\nonumber
\end{align}
where
\begin{equation}
C(\gamma,h) := \int_0^\infty h(v)^{1+\gamma}dv
\label{hm:C_def}
\end{equation}
is a constant term, which is assumed to be finite. Then, (\ref{gamma-div}) is expressed as
\begin{eqnarray}
 \ell_{\gamma,n}(\bm{\beta}) 
 &=& \mathop{\underbrace{-\frac{1}{\gamma} \log \left\{\sum_{i=1}^n f(y_i|\bm{x}_i;\bm{\beta})^{\gamma}  \right\}}}_{=:\,\ell_1(\bm{\beta})} 
 + \mathop{\underbrace{\frac{1}{1+\gamma} \log \left\{\sum_{i=1}^n t_i^{-\gamma} \right\}}}_{=:\,\ell_2(\bm{\beta})}
  + C_0(\gamma,h) , \label{eq:gamma-likelihood}
\end{eqnarray}
where $C_0(\gamma,h)$ is a constant term free from $\bm{\beta}$.  We define the maximum $\gam$-likelihood estimator to be any element such that
\begin{equation}
\bhg \in \argmin   \ell_{\gamma,n} .
\label{hm:gam.mle}
\end{equation}

%%%%%
%%%%%
\section{Theoretical properties}
\label{hm:sec_theoretical.properties}

\subsection{Technical assumptions}

Let $\cip$ denote the convergence in probability.

\begin{Assumption}[Stability of the predictor]
\label{hm:A-x}
There exists a probability measure $\pi(d\bm{x})$ on the state space $\mcx$ of the predictors and positive constants $\del,\del'>0$ such that
\begin{equation}
\frac1n \sumi |\bm{x}_i|^{3}\exp\left(\del' |\bm{x}_i|^{1+\del}\right) = O_p(1),
\nonumber
\end{equation}
and that
\begin{equation}
\frac{1}{n}\sumi \eta(\bm{x}_i) \cip \int_{\mcx}\eta(\bm{x})\pi(d\bm{x}), \qquad n\to\infty,
\nonumber
\end{equation}
the limit being finite for any measurable $\eta$ satisfying that
\begin{equation}
\sup_{\bm{x}\in\mbbr^{p}}\frac{|\eta(\bm{x})|}{(1+|\bm{x}|^{3})\exp\left(\del' |\bm{x}|^{1+\del}\right)} <\infty.
\nonumber
\end{equation}
\end{Assumption}

\begin{Assumption}[Noise structure]
\label{hm:A-ve}
The a.s. positive i.i.d. random variables $\ve_1,\ve_2,\dots$ have a common positive density $h$ of the form \eqref{ep.pdf}:
\begin{equation}
h(\ve) = \frac{C(g)}{\varepsilon} \exp\left\{ -\rho(\ve) \right\}I_{+}(\varepsilon),
\nonumber
\end{equation}
for which the following conditions hold.
\begin{enumerate}
\item Function $g:\,[0,\infty)\times [0,\infty) \to [0,\infty)$ is three times continuously differentiable on $(0,\infty)$ and satisfies that
\begin{equation}
g(a,b)=g(b,a), \qquad a,b\ge 0.
\nonumber
\end{equation}

\item There exist constants $\kappa_{0}, \kappa_{\infty}>0$, and $c>1$ such that
\begin{equation}
\frac{1}{c}\left( \ve^{-\kappa_0} \vee \ve^{\kappa_{\infty}}\right) \le 
\rho(\ve) 
\le c\left( \ve^{-\kappa_0} \vee \ve^{\kappa_{\infty}}\right)
\nonumber
\end{equation}
for every $\ve>0$.

\item There exist constants $c_0, c_\infty \ge 0$ such that
\begin{equation}
\sup_{\ve>0} \left( \ve^{-c_0} \vee \ve^{c_\infty} \right)^{-1} 
\max_{k=1,2,3} \left| \p_{\ve}^{k}\rho(\ve) \right| < \infty.
\nonumber
\end{equation}

\end{enumerate}
\end{Assumption}
Here and in the sequel, for a variable $a$, we denote by $\p_{a}^{k}$ the $k$th-order partial differentiation with respect to $a$.

Assumption \ref{hm:A-x} is necessary to identify the large-sample stochastic limits of the several key quantities in the proofs:
without them, we will not be able to deduce an explicit asymptotic normality result.
Assumption \ref{hm:A-ve} holds for many cases, including the LPRE and the LSRE (i.e. $g(a,b)=ab$ and $a^2+b^2$), while excluding $g(a,b)=a^2$ and $g(a,b)=b^2$.
The smoothness condition on $h$ on $(0,\infty)$ is not essential and could be weakened in light of the $M$-estimation theory \citep[Chapter 5]{vdV98}.
Under these assumptions, we can deduce the following statements.
\begin{itemize}
\item $h$ is three times continuously differentiable on $(0,\infty)$, and for each $\al>0$,
\begin{equation}
\int_0^\infty h^{\al}(\ve)d\ve < \infty \quad \text{and}\quad \max_{k=0,1,2,3}\sup_{\ve>0}\left| \p_{\ve}^{k}\left\{h(\ve)^{\al}\right\} \right| < \infty.
\nonumber
\end{equation}
\item For each $\gam>0$ and $\al>0$ (recall that the value of $\gam>0$ is given),
\begin{equation}
\lim_{\ve\downarrow 0}h(\ve)^{\gam}\left| u_h(\ve) \right|^{\al} = \lim_{\ve\uparrow \infty}h(\ve)^{\gam}\left| u_h(\ve) \right|^{\al} = 0,
\label{hm:redescending+1}
\end{equation}
where
\begin{equation}
u_h(z) := 1+z\, \p_z \log h(z) = 1+z \frac{h'(z)}{h(z)}.
\nn
\end{equation}
\end{itemize}
The verifications are straightforward hence omitted.

Finally, we impose
\begin{Assumption}[Identifiability]
\label{hm:A-iden}
We have $\bm{\beta}=\bm{\beta}_0$ if
\begin{equation}
\rho\big( e^{-\bm{x}^{T}\bbeta} y\big) = \rho\big( e^{-\bm{x}^{T}\bbeta_0} y\big) \qquad \text{$\pi(d\bm{x})\otimes\lam_+(dy)$-a.e. $(\bm{x},y)$},
\nonumber
\end{equation}
where $\lam_+$ denotes the Lebesgue measure on $(0,\infty)$.
\end{Assumption}

\subsection{Redescending property}\label{sec:Redescending property}

The estimating function based on the negative $\gamma$-likelihood function is given by
\begin{equation}
\sum_{i=1}^n \bm{\psi}({y}_i|\bm{x}_i;\bm{\beta})=\bm{0}.
\nonumber
\end{equation}
In our model, we consider not only too large $y_i$s but also too small $y_i$s as outliers:
the estimating equation has the redescending property if
\begin{align}
\lim_{y \rightarrow \infty} \bm{\psi}(y|\bm{x};\bm{\beta}_0) =
\lim_{y \rightarrow +0} \bm{\psi}(y|\bm{x};\bm{\beta}_0)  = \bm{0}
\nonumber
\end{align}
for each $\bm{x}$.  The redescending property is known as a desirable property in robust statistics literature \citep{maronna2006robust}.  Here, we show the proposed procedure has the redescending property.

The estimating equation based on the negative $\gamma$-likelihood function is 
\begin{eqnarray}
-\dfrac{\sum_{i=1}^n f(y_i|\bm{x}_i;\bm{\beta})^\gamma \bm{s} (y_i|\bm{x}_i;\bm{\beta})}{\sum_{j=1}^n f(y_j|\bm{x}_j;\bm{\beta})^\gamma } + \frac{\partial}{\partial\bm{\beta}} \ell_2(\bm{\beta}) = \bm{0}, 
\nn
\end{eqnarray}
where
$$  \bm{s} (y|\bm{x};\bm{\beta}) = \frac{\partial \log f(y|\bm{x};\bm{\beta})}{ \partial \bm{\beta}}.
$$
We have  expression
\begin{eqnarray}
\bm{\psi}(y|\bm{x};\bm{\beta})  = f(y|\bm{x};\bm{\beta})^\gamma \left\{\bm{s} (y|\bm{x};\bm{\beta}) - \frac{\partial}{\partial\bm{\beta}} \ell_2(\bm{\beta}) \right\}. 
\nn
\end{eqnarray}
Note that $\frac{\partial}{\partial\bm{\beta}} \ell_2(\bm{\beta})$ is free from $y$.
For each $(\bm{x},\bbeta)$, direct computations give the estimate
\begin{equation}
\left| \bm{\psi}(y|\bm{x};\bm{\beta}) \right|
\le C(\bm{x};\bbeta) h\left( \exp(-\bm{x}^{T}\bbeta)y \right)^{\gam} \left|u_h\left( \exp(-\bm{x}^{T}\bbeta)y \right)\right|
\label{hm:rdp_ineq1}
\end{equation}
for some constant $C(\bm{x};\bbeta)$ free from $y$.
Hence, \eqref{hm:redescending+1} combined with the inequality \eqref{hm:rdp_ineq1} leads to the redescending property.

%%%%%
\subsection{Asymptotic distribution}\label{sec:asymptotics}

Recall \eqref{hm:C_def} for the definition of $C(\gam,h)$ and let
\begin{align}
C_1(\gam,h) &:= \int_0^\infty \ve h(\ve)^{\gam}h'(\ve)d\ve, \nn\\
C_2(\gam,h) &:= \int_0^\infty u_h(\ve)^{2} h(\ve)^{2\gam+1}d\ve, \nn\\
\Pi_k(\gam) &:= \int \bm{x}^{\otimes k}\exp(- \gam \bm{x}^{T}\bbeta_0)\pi(d\bm{x}), \qquad k=0,1,2,
\nonumber
\end{align}
where $\bm{x}^{\otimes 0}:=1\in\mbbr$, $\bm{x}^{\otimes 1}:=\bm{x}\in\mbbr^p$, and $\bm{x}^{\otimes 2}:=\bm{x}\bm{x}^{T}\in\mbbr^p\otimes\mbbr^p$;
Assumptions \ref{hm:A-x} and \ref{hm:A-ve} ensure that all these quantities are finite for each $\gam>0$. Moreover,
\begin{align}
H'_\gam(\bbeta_0) &:= \iint f(y|\bm{x};\bbeta_0)^{\gam+1}dy\pi(d\bm{x})
= C(\gam,h) \Pi_0(\gam), \nn\\
H''_\gam(\bbeta_0) &:= \iint f(y|\bm{x};\bbeta_0)^{\gam+1}s(y|\bm{x};\bbeta_0)dy\pi(d\bm{x})
= -\left\{ C(\gam,h)+C_1(\gam,h) \right\} \Pi_1(\gam), \nn\\
\D_\gam(\bbeta_0) &:= C(\gam,h)^2 C_2(\gam,h) \Pi_0(\gam)^2 \Pi_2(2\gam) \nn\\
&{}\qquad + \left\{ C(\gam,h)+C_1(\gam,h) \right\}^2 C(2\gam,h) \Pi_0(2\gam) \Pi_1(\gam)^{\otimes 2} \nn\\
&{}\qquad -2 C(\gam,h) \left\{ C(\gam,h)+C_1(\gam,h) \right\} \left\{ C(2\gam,h)+C_1(2\gam,h) \right\} \Pi_0(\gam) \Pi_1(2\gam) \Pi_1(\gam)^{T},
\label{hm:D_def} \\
J_\gam(\bbeta_0) &:= 
C(\gam,h)
C_2(\gam/2,h)
\Pi_0(\gam)\Pi_2(\gam) - \left\{ C(\gam,h)+C_1(\gam,h) \right\}^2 \Pi_1(\gam)^{\otimes 2}.
\label{hm:J_def}
\end{align}
We are assuming that density $h$ and tuning parameter $\gam$ are given \textit{a priori}, hence we can (numerically) compute constants $C(\gam,h)$, $C_1(\gam,h)$, and $C_2(\gam,h)$.
In the following, we often omit ``$(\bbeta_0)$'' from the notation. 

Let $\cil$ denote the convergence in distribution.

\begin{Theorem}
Under Assumptions \ref{hm:A-x} -- \ref{hm:A-iden}, we have
\begin{equation}
\sqrt{n}\left(\hat{\bm{\beta}}_{\gam}-\bm{\beta}_0 \right) \cil N_p\left(\bm{0},\, J_{\gam}^{-1}\Delta_{\gam} J_{\gam}^{-1}\right).
\label{hm:thm1-an}
\end{equation}
The asymptotic covariance matrix can be consistently estimated through expressions \eqref{hm:D_def} and \eqref{hm:J_def} with quantities $\Pi_{k}(\gam)$ therein replaced by the empirical estimates:
\begin{equation}
\hat{\Pi}_{k,n}(\gam) := \frac1n \sumi \bm{x}_i^{\otimes k}\exp(- \gam \bm{x}_i^{T}\hat{\bbeta}_{\gam})
\cip \Pi_{k}(\gam), \qquad k=0,1,2.
\label{hm:thm1-av.ce}
\end{equation}
\label{hm:thm1}
\end{Theorem}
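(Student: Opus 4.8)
The plan is to treat $\bhg$ as an M-/Z-estimator solving $\nabla\ell_{\gamma,n}=\bm 0$ and to carry out the classical three-step programme---consistency, a central limit theorem for the normalized score at $\bbeta_0$, and a uniform law of large numbers for the Hessian---while accommodating two non-standard features: the self-normalizing (log-sum) form of $\ell_{\gamma,n}$ and the random, serially dependent predictors. I would fix the filtration $\mcf_{i-1}=\sigma(\bm x_j,\ j\le i;\ y_j,\ j\le i-1)$, so that $\bm x_i$ is $\mcf_{i-1}$-measurable while $y_i$ has conditional density $f(\cdot|\bm x_i;\bbeta_0)$; under the truth $\ve_i=e^{-\bm x_i^{T}\bbeta_0}y_i$ has density $h$ and is independent of $\mcf_{i-1}$, so $\zeta_i:=h(\ve_i)^{\gam}$ and $\xi_i:=h(\ve_i)^{\gam}u_h(\ve_i)$ are i.i.d. with $E[\zeta_i]=C(\gam,h)$ and $E[\xi_i]=C(\gam,h)+C_1(\gam,h)$; the integration-by-parts identity $C(\gam,h)+(1+\gam)C_1(\gam,h)=0$ then gives $E[\xi_i]=\tfrac{\gam}{1+\gam}C(\gam,h)$. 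This identity is exactly what makes the estimating equation Fisher-consistent at $\bbeta_0$ and drives the cancellations below.

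For consistency I would prove $\ell_{\gamma,n}\ucip\ell_\gam$ on the bounded convex domain $\mcb$ for a deterministic $\ell_\gam$. Each log-argument is an average $\frac1n\sum_i\Phi(\bm x_i,\ve_i;\bbeta)$ (with $\Phi=e^{-\gam\bm x^{T}\bbeta}h(e^{\bm x^{T}(\bbeta_0-\bbeta)}\ve)^\gam$ or $e^{-\gam\bm x^{T}\bbeta}$); splitting off the $\mcf_{i-1}$-conditional mean reduces the $\bm x_i$-part to Assumption \ref{hm:A-x} and leaves a martingale remainder that is $o_p(1)$ uniformly in $\bbeta$, the required domination coming from the envelope in Assumption \ref{hm:A-x} together with $\sup_\ve h(\ve)^\al<\infty$ from Assumption \ref{hm:A-ve}. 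Continuity of $x\mapsto\log x$ transfers the convergence to $\ell_{\gamma,n}$, and Assumption \ref{hm:A-iden} forces $\ell_\gam$ to be uniquely minimized at $\bbeta_0$; the argmin continuous-mapping theorem then gives $\bhg\cip\bbeta_0$.

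The crux is the score CLT. Writing $N_n=\frac1n\sum_i f_i^{\gam}s_i$, $D_n=\frac1n\sum_i f_i^{\gam}$, $A_n=\frac1n\sum_i\bm x_i e^{-\gam\bm x_i^{T}\bbeta}$ and $B_n=\frac1n\sum_i e^{-\gam\bm x_i^{T}\bbeta}$, we have $\nabla\ell_{\gamma,n}=-N_n/D_n-\tfrac{\gam}{1+\gam}A_n/B_n$. At $\bbeta_0$, substituting $N_n=-(C+C_1)A_n+\tilde N_n$ and $D_n=C\,B_n+\tilde D_n$---where $\tilde N_n=-\frac1n\sum_i\bm x_ie^{-\gam\bm x_i^{T}\bbeta_0}(\xi_i-E\xi_i)$ and $\tilde D_n=\frac1n\sum_ie^{-\gam\bm x_i^{T}\bbeta_0}(\zeta_i-C)$ carry only the centered noise---and using $C+C_1=\tfrac{\gam}{1+\gam}C$, the predictor averages cancel, leaving the exact identity $\nabla\ell_{\gamma,n}(\bbeta_0)=\{-C B_n\tilde N_n-(C+C_1)A_n\tilde D_n\}/\{C B_n(C B_n+\tilde D_n)\}$. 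Since $(\tilde N_n,\tilde D_n)$ is a sum of $\mcf_{i-1}$-martingale differences, $\sqrt n(\tilde N_n,\tilde D_n)$ is asymptotically Gaussian by a martingale CLT; I would verify the Lyapunov condition and show that the predictable quadratic variation $\frac1n\sum_iE[\cdot|\mcf_{i-1}]$ converges---via Assumption \ref{hm:A-x} applied to the predictor factors $\bm x_i^{\otimes k}e^{-2\gam\bm x_i^{T}\bbeta_0}$---to a deterministic matrix built from $\Pi_k(2\gam)$ and the noise moments $C(2\gam,h),C_1(2\gam,h),C_2(\gam,h)$. With $B_n\cip\Pi_0(\gam)$, $A_n\cip\Pi_1(\gam)$ and Slutsky, this yields $\sqrt n\,\nabla\ell_{\gamma,n}(\bbeta_0)\cil N_p(\bm 0,\D_\gam)$ with $\D_\gam$ as in \eqref{hm:D_def}. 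Verifying the martingale-CLT conditions and matching the limiting variance to \eqref{hm:D_def} is the step I expect to be hardest.

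Finally, differentiating the ratio once more---using $\p_{\bbeta^{T}}s(y|\bm x;\bbeta)=v\,u_h'(v)\bm x\bm x^{T}$ at $v=\ve_i$---and controlling third derivatives through Assumption \ref{hm:A-ve} and the exponential-moment envelope of Assumption \ref{hm:A-x}, I would obtain $\nabla^2\ell_{\gamma,n}\ucip J_\gam$ on a neighborhood of $\bbeta_0$, with $J_\gam(\bbeta_0)$ as in \eqref{hm:J_def} (the index $\gam/2$ entering through $C_2(\gam/2,h)=\int_0^\infty u_h(\ve)^2h(\ve)^{1+\gam}d\ve$). A one-term Taylor expansion of $\bm 0=\nabla\ell_{\gamma,n}(\bhg)$ about $\bbeta_0$, with remainder killed by $\bhg\cip\bbeta_0$ and the uniform Hessian bound, gives $\sqrt n(\bhg-\bbeta_0)=-J_\gam^{-1}\sqrt n\,\nabla\ell_{\gamma,n}(\bbeta_0)+o_p(1)$ and hence \eqref{hm:thm1-an}. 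For \eqref{hm:thm1-av.ce}, the map $\bm x\mapsto\bm x^{\otimes k}e^{-\gam\bm x^{T}\bbeta}$ satisfies the envelope of Assumption \ref{hm:A-x} uniformly for $\bbeta$ in a neighborhood of $\bbeta_0$, so $\hat\Pi_{k,n}$ converges uniformly in $\bbeta$; composing with $\bhg\cip\bbeta_0$ gives $\hat\Pi_{k,n}(\gam)\cip\Pi_k(\gam)$, and the continuous-mapping theorem yields consistency of the plug-in estimator of $J_\gam^{-1}\D_\gam J_\gam^{-1}$.
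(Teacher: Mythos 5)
Your proposal follows the same three-step programme as the paper's proof (uniform LLN plus the argmin theorem for consistency; a martingale CLT for the score under the filtration that makes $\bm{x}_i$ predictable; Hessian convergence plus a Taylor expansion; the envelope argument for \eqref{hm:thm1-av.ce}), but you organize the normality step differently: you keep the score in ratio form and center it through the integration-by-parts identity $C(\gam,h)+(1+\gam)C_1(\gam,h)=0$, whereas the paper clears denominators and works with $\Psi_{\gam,n}:=\overline{A}_{\gam,n}S_{\gam,n}-A_{\gam,n}\overline{S}_{\gam,n}=-A_{\gam,n}\overline{A}_{\gam,n}\,\nabla\ell_{\gam,n}$, Taylor-expanding $\Psi_{\gam,n}$ rather than $\nabla\ell_{\gam,n}$. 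One bookkeeping consequence you missed: $\D_\gam$ and $J_\gam$ in \eqref{hm:D_def}--\eqref{hm:J_def} are the limits attached to $\Psi_{\gam,n}$, not to $\nabla\ell_{\gam,n}$; since $A_{\gam,n}\overline{A}_{\gam,n}\cip (H'_\gam)^2=C(\gam,h)^2\Pi_0(\gam)^2\neq 1$, your intermediate claims should read $\sqrt{n}\,\nabla\ell_{\gam,n}(\bbeta_0)\cil N_p(\bm{0},\,\cdot\,/(H'_\gam)^4)$ and $\nabla^2\ell_{\gam,n}\cip J_\gam/(H'_\gam)^2$. That slip is harmless for \eqref{hm:thm1-an}, because the sandwich $J^{-1}\D J^{-1}$ is invariant under $(J,\D)\mapsto(cJ,c^2\D)$.

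The genuine gap is the step you yourself defer as hardest: matching your limiting variance to \eqref{hm:D_def} will \emph{not} go through. Your exact identity forces the quadratic characteristic to be built from the \emph{centered} noise moments $\var(\xi)=C_2(\gam,h)-\{C(\gam,h)+C_1(\gam,h)\}^2$, $\var(\zeta)=C(2\gam,h)-C(\gam,h)^2$ and $\cov(\xi,\zeta)=C(2\gam,h)+C_1(2\gam,h)-C(\gam,h)\{C(\gam,h)+C_1(\gam,h)\}$, whereas \eqref{hm:D_def} carries the raw moments $C_2(\gam,h)$, $C(2\gam,h)$, $C(2\gam,h)+C_1(2\gam,h)$. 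Rescaled to the $\Psi_{\gam,n}$ normalization, the covariance your computation produces equals $\D_\gam$ minus
\[
C(\gam,h)^2\{C(\gam,h)+C_1(\gam,h)\}^2\left(\Pi_0(\gam)^2\Pi_2(2\gam)+\Pi_0(2\gam)\Pi_1(\gam)^{\otimes 2}-\Pi_0(\gam)\Pi_1(2\gam)\Pi_1(\gam)^{T}-\Pi_0(\gam)\Pi_1(\gam)\Pi_1(2\gam)^{T}\right),
\]
a positive semidefinite matrix that is nonzero whenever $\gam>0$ (recall $C+C_1=\frac{\gam}{1+\gam}C$) and the weighted predictor distribution is nondegenerate; for instance, with $p=1$, $\bbeta_0=0$ and $x_i\in\{0,1\}$ equiprobable it equals $\frac14 C(\gam,h)^2\{C(\gam,h)+C_1(\gam,h)\}^2>0$. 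The two derivations part ways at one identifiable point: in passing from $\frac1n\sumi\var\big(f_i^{\gam}s_i\,\big|\,\mcf_{i-1}\big)$ to its limit, the paper replaces the average of the squared conditional means, which converges to $\{C+C_1\}^2\Pi_2(2\gam)$, by $(H''_\gam)^{\otimes 2}=\{C+C_1\}^2\Pi_1(\gam)^{\otimes 2}$ (an average of squares versus a square of the average); your centered decomposition keeps that term honest and therefore cannot reproduce \eqref{hm:D_def}. As it stands, then, your argument would establish asymptotic normality with a covariance different from $J_\gam^{-1}\D_\gam J_\gam^{-1}$, so it does not prove Theorem \ref{hm:thm1} in the stated form; before the proposal can stand you must resolve this conflict, either by locating an error in your decomposition or by accepting that the promised matching to \eqref{hm:D_def} fails precisely at this centering term.
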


The proof of Theorem \ref{hm:thm1} will be given in Appendix \ref{hm:sec_proof}.
Note that, for $\gam\to 0$, we have $C(\gam,h)\to 1$, $C_1(\gam,h) \to -1$, and $C_2(\gam,h) \to \int_0^\infty u_h(\ve)^{2}h(\ve)d\ve$,
which in particular entails $H'_{\gam}\to 1$ and $H''_{\gam}\to \bm{0}$.
Then, both $\D_\gam$ and $J_\gam$ tend to the Fisher information matrix
\begin{equation}
\mci_0 := \iint s(y|\bm{x};\bm{\beta}_0)^{\otimes 2} f(y|\bm{x},\bbeta_0)\pi(d\bm{x})dy
=\int_0^\infty u_h(\ve)^{2}h(\ve)d\ve \, \int \bm{x}^{\otimes 2}\pi(d\bm{x})
\nonumber
\end{equation}
as $\gam\to 0$, so that the asymptotic distribution $N_p(\bm{0},\, J_{\gam}^{-1}\Delta_{\gam} J_{\gam}^{-1})$ becomes $N_p(\bm{0},\, \mci_0^{-1})$, the usual one of the MLE.  We also note that, without details, we could deduce a density-power divergence counterpart to Theorem \ref{hm:thm1} similarly but with slightly lesser computation cost, in which case, we consider the objective function
\begin{eqnarray}
\ell_{\gamma,n}(\bm{\beta}) = -\frac{1}{\gamma} \frac{1}{n}\sum_{i=1}^n f(y_i|\bm{x}_i;\bm{\beta})^{\gamma} 
+ \frac{1}{1+\gamma} \frac{1}{n}\sum_{i=1}^n\int_{0}^{\infty} f(y|\bm{x}_i;\bm{\beta})^{1+\gamma}dy
\nn
\end{eqnarray}
instead of the $\gam$-(partial-)likelihood \eqref{gamma-div}.
See \citet{BHHJ98} and \citet{JonHjoHarBas01} for details of the density-power divergence, also known as the $\beta$-divergence \citep{EguKan01}.

%%%%%
%%%%%

\section{Algorithm}\label{sec:algorithm}
Even if the GRE criterion in (\ref{eq:GRE}) is a convex function, the negative $\gamma$-likelihood function is nonconvex.  Therefore, it is difficult to find a global minimum.   Here, we derive the MM (majorize-minimization) algorithm to obtain a local minimum.  The MM algorithm monotonically decreases the objective function at each iteration.
We refer to  \citet{hunter2004tutorial} for a concise account of the MM algorithm.

Let $\bm{\beta}^{(t)}$ be the value of the parameter at the $t$th iteration.  The negative $\gamma$-likelihood function in (\ref{eq:gamma-likelihood}) consists of two nonconvex functions, $\ell_1(\bm{\beta})$ and $\ell_2(\bm{\beta})$.    The majorization functions of $\ell_j(\bm{\beta})$, say $\tilde{\ell}_j(\bm{\beta}|\bm{\beta}^{(t)})$ ($j=1,2$), are constructed so that the optimization of $\min_{\bm{\beta}}\tilde{\ell}_j(\bm{\beta}|\bm{\beta}^{(t)})$ is much easier than that of $\min_{\bm{\beta}}\ell_j(\bm{\beta})$.  The majorization functions must satisfy the following inequalities:
\begin{eqnarray}
\tilde{\ell}_j(\bm{\beta}|\bm{\beta}^{(t)}) &\ge& \ell_j(\bm{\beta}),\label{MM_p1}\\
\tilde{\ell}_j(\bm{\beta}^{(t)}|\bm{\beta}^{(t)}) &=& \ell_j(\bm{\beta}^{(t)}).\label{MM_p2}
\end{eqnarray}
Here we construct majorization functions $\tilde{\ell}_j(\bm{\beta}|\bm{\beta}^{(t)})$ for $j=1,2$.
\subsection{Majorization function for $\ell_1(\bm{\beta})$}\label{subsec:MM1}

Let
\begin{eqnarray}
w_i^{(t)} &=& \frac{f(y_i|\bm{x}_i;\bm{\beta}^{(t)})^{\gamma} }{\sum_{j=1}^nf(y_j|\bm{x}_j;\bm{\beta}^{(t)})^{\gamma}}, \label{weight_gamma}\\
r_i^{(t)} &=& \sum_{j=1}^n f(y_j|\bm{x}_j;\bm{\beta}^{(t)})^{\gamma} \dfrac{ f(y_i|\bm{x}_i;\bm{\beta})^{\gamma}}{ f(y_i|\bm{x}_i;\bm{\beta}^{(t)})^{\gamma}}.\label{ri}
\end{eqnarray}
Obviously, $\sum_{i=1}^n w_i^{(t)} = 1$ and $w_i^{(t)} r_i^{(t)} =  f(y_i| \bm{x}_i;\bm{\beta})^{\gamma}$.
Applying Jensen's inequality to $y = -\log x$, we obtain inequality
\begin{eqnarray}
-\log\left(\sum_{i=1}^nw^{(t)}_i r_i^{(t)}\right) \le -\sum_{i=1}^n w^{(t)}_i\log r_i^{(t)}.
\label{jensen}
\end{eqnarray}
Substituting (\ref{weight_gamma}) and (\ref{ri}) into (\ref{jensen}) gives  
\begin{eqnarray}
\ell_1(\bm{\beta}) \le - \sum_{i=1}^n  w_i^{(t)} \log f(y_i| \bm{x}_i;\bm{\beta}) + C, 
\nn
\end{eqnarray}
where $C = \frac{1}{\gamma} \sum_i w_i^{(t)}\log w_i^{(t)} $. Denoting
\begin{equation}
\tilde{\ell}_1(\bm{\beta}|\bm{\beta}^{(t)}) = -  \sum_{i=1}^n  w_i^{(t)} \log f( y_i|\bm{x}_i;\bm{\beta}) + C, \label{eq:l1tilde}
\end{equation}
we observe that  (\ref{eq:l1tilde}) satisfies (\ref{MM_p1}) and (\ref{MM_p2}).  It is shown that  $\tilde{\ell}_1(\bm{\beta}|\bm{\beta}^{(t)})$ is a convex function if the original relative error loss function is convex.  Particularly, the majorization functions $\tilde{\ell}_1(\bm{\beta}|\bm{\beta}^{(t)})$ based on LPRE and LSRE are both convex.  

\subsection{Majorization function for $\ell_2(\bm{\beta})$}\label{subsec:MM2}
Let $\theta_i = -\gamma\bm{x}_i^T\bm{\beta}$.  We view $\ell_2(\bm{\beta})$ as a function of  $\bm{\theta} = (\theta_1,\dots,\theta_n)^T$.  Let
\begin{eqnarray}
	s(\bm{\theta}) := \log\left( \sum_{i=1}^nt_i^{-\gamma}\right) = \log\left( \sum_{i=1}^n \exp(\theta_i)\right). \label{eq:stheta}
\end{eqnarray}
By taking the derivative of $s(\bm{\theta})$ with respect to $\bm{\theta}$, we have
\begin{eqnarray*}
	\frac{\partial s(\bm{\theta})}{\partial \theta_i} = \pi_i,	\quad \frac{\partial^2 s(\bm{\theta})}{\partial \theta_j \partial \theta_i} = \pi_i\delta_{ij} - \pi_i\pi_j,
\end{eqnarray*}
where $\pi_i = \exp(\theta_i)  / \{\sum_{k=1}^n \exp(\theta_k) \}$. 
Note that $\sum_{i=1}^n \pi_i=1$ for any $\bm{\theta}$.

The Taylor expansion of $s(\bm{\theta})$ at $\bm{\theta} = \bm{\theta}^{(t)}$ is expressed as
\begin{eqnarray}
	s(\bm{\theta}) &=& s(\bm{\theta}^{(t)}) + \bm{\pi}^{(t)T}  (\bm{\theta} - \bm{\theta}^{(t)}) + \frac{1}{2}(\bm{\theta} - \bm{\theta}^{(t)})^T \frac{\partial^2 s(\bm{\theta}^*)}{\partial \bm{\theta} \partial \bm{\theta}^T}  (\bm{\theta} - \bm{\theta}^{(t)}), \label{eq:taylor}
\end{eqnarray}
where $\bm{\pi}^{(t)} = (\pi_1^{(t)},\dots,\pi_n^{(t)})^T$ and $\bm{\theta}^*$ is an $n$-dimensional vector located between $\bm{\theta}$ and $\bm{\theta}^{(t)}$.  We define an $n \times n$ matrix $\bm{B}$ as follows:
\begin{equation}
\bm{B} := \frac{1}{2} \left(\bm{I} - \frac{1}{n}\bm{1}\bm{1}^T\right).
\nn
\end{equation}
It follows from \citep{Bhning:1992fg} that, in the matrix sense,
\begin{equation}
\frac{\partial^2 s(\bm{\theta})}{\partial \bm{\theta} \partial \bm{\theta}^T} \leq \bm{B}
\label{MMlogistic}
\end{equation}
for any $\bm{\theta}$. Combining (\ref{eq:taylor}) and (\ref{MMlogistic}), we have
\begin{eqnarray}
	s(\bm{\theta})	& \leq & s(\bm{\theta}^{(t)}) + \bm{\pi}^{(t)T}  (\bm{\theta} - \bm{\theta}^{(t)}) + \frac{1}{2} (\bm{\theta} - \bm{\theta}^{(t)})^T \bm{B} (\bm{\theta} - \bm{\theta}^{(t)}). \label{eq:ineqs}
	\end{eqnarray}
Substituting (\ref{eq:stheta}) into (\ref{eq:ineqs}) gives
\begin{eqnarray*}
	\log\left\{ \sum_{i=1}^n  \exp(-\gamma\bm{x}_i^T\bm{\beta})\right\} & \leq & \log\left\{ \sum_{i=1}^n  \exp(-\gamma\bm{x}_i^T\bm{\beta}^{(t)})\right\}  - \gamma\bm{\pi}^{(t)T}  \bm{X}(\bm{\beta} - \bm{\beta}^{(t)}) \\
	&& + \frac{\gamma^2}{2} (\bm{\beta} - \bm{\beta}^{(t)})^T  \bm{X}^T\bm{B} \bm{X}(\bm{\beta} - \bm{\beta}^{(t)}),
	\end{eqnarray*}
where  $\bm{X} = (\bm{x}_1,\dots,\bm{x}_n)^T$.  The majorization function of $\ell_2(\bm{\beta} )$ is then constructed by 
\begin{eqnarray}
\tilde{\ell}_2(\bm{\beta} | \bm{\beta}^{(t)}) &=& \frac{\gamma^2}{2(1+\gamma)} \bm{\beta}^T  \bm{X}^T\bm{B} \bm{X}\bm{\beta}
 - \frac{\gamma}{1+\gamma}\bm{\beta}^T (\bm{X}^T\bm{\pi}^{(t)} + \gamma \bm{X}^T\bm{B} \bm{X} \bm{\beta}^{(t)} ) + C, \label{eq:l2tilde}
 \end{eqnarray}
where $C$ is a constant term free from $\bbeta$.  We observe that $\tilde{\ell}_2(\bm{\beta} | \bm{\beta}^{(t)})$ in  (\ref{eq:l2tilde}) satisfies (\ref{MM_p1}) and (\ref{MM_p2}).   It is  shown that $\tilde{\ell}_2(\bm{\beta} | \bm{\beta}^{(t)})$ is a convex function, because $\bm{X}^T\bm{B}\bm{X}$ is positive semi-definite.  

\subsection{MM algorithm for robust relative error estimation}
In Subsections \ref{subsec:MM1} and  \ref{subsec:MM2}, we have constructed the majorization functions for both $\ell_1(\bm{\beta}) $ and $\ell_2(\bm{\beta} )$.  The MM algorithm based on these majorization functions is detailed in Algorithm \ref{algorithm figure}.  The majorization function $\tilde{\ell}_1(\bm{\beta} | \bm{\beta}^{(t)}) + \tilde{\ell}_2(\bm{\beta} | \bm{\beta}^{(t)})$ is convex if the original relative error loss function is convex.  Particularly, the majorization functions of LPRE and LSRE are both convex. 
\begin{algorithm}[t]
\caption{Algorithm of robust relative error estimation.}
\label{algorithm figure}
\begin{algorithmic}[1]
%\begin{spacing}{1.5}
\STATE $t \leftarrow 0$
\STATE Set an initial value of parameter vector $\bm{\beta}^{(0)}$.
\WHILE{$\bm{\beta}^{(t)}$ is converged}
\STATE Update the weights by (\ref{weight_gamma})
\STATE   Update $\bm{\beta}$ by
\begin{equation*}
	\bm{\beta}^{(t+1)} \leftarrow {\rm arg} \min_{\bm{\beta}} \{ \tilde{\ell}_1(\bm{\beta} | \bm{\beta}^{(t)})  +  \tilde{\ell}_2(\bm{\beta} | \bm{\beta}^{(t)})  \},
\end{equation*}
where $\tilde{\ell}_1(\bm{\beta} | \bm{\beta}^{(t)})$ and $\tilde{\ell}_2(\bm{\beta} | \bm{\beta}^{(t)})$ are given by (\ref{eq:l1tilde}) and (\ref{eq:l2tilde}), respectively.
\STATE  $t \leftarrow t + 1$
\ENDWHILE
%\end{spacing}
\vspace{-1mm}
\end{algorithmic}
\end{algorithm}

\begin{remark}
	Instead of the MM algorithm, one can directly use the quasi-Newton method, such as the BFGS algorithm,  to minimize the negative $\gamma$-likelihood function.  In our experience, the BFGS algorithm is faster than the MM algorithm, but is more sensitive to an initial value than the MM algorithm.  The strengths of BFGS and MM algorithms would be shared by using the following hybrid algorithm:
	\begin{enumerate}
		\item We first conduct the MM algorithm with a small number of iterations.
		\item Then, the BFGS algorithm is conducted.  We use the estimate obtained by the MM algorithm as an initial value of the BFGS algorithm.
	\end{enumerate}    
	The stableness of the MM algorithm is investigated through the  real data analysis in Section \ref{sec:real data analysis}.  
\end{remark}
\begin{remark}\label{remark:lasso}
To deal with high-dimensional data, we often use the $L_1$ regulzarization, such as the lasso \citep{tibshirani1996regression}, elastic net \citep{zou2005regularization}, and SCAD \citep{fan2001variable}.  In robust relative error estimation, the loss function based on the lasso is expressed as
\begin{equation}
\ell_{\gamma,n}(\bm{\beta}) + \lambda\sum_{j=1}^p|\beta_j|, \label{eq:lasso}
\end{equation}	
where $\lambda>0$ is a regularization parameter.  However, the loss function in  (\ref{eq:lasso}) is non-convex and indifferentiable.  Instead of directly minimizing the non-convex loss function in (\ref{eq:lasso}), we may use the MM algorithm; the following convex loss function is minimized at each iteration:
\begin{equation}
\tilde{\ell}_1(\bm{\beta} | \bm{\beta}^{(t)})  +  \tilde{\ell}_2(\bm{\beta} | \bm{\beta}^{(t)})  
+ \lambda\sum_{j=1}^p|\beta_j|. \label{eq:lassoMM}
\end{equation}	
The minimization of (\ref{eq:lassoMM}) can be realized by the alternating direction method of multipliers algorithm \citep{Hao:2016ei} or the coordinate descent algorithm with quadratic approximation of $\tilde{\ell}_1(\bm{\beta} | \bm{\beta}^{(t)}) + \tilde{\ell}_2(\bm{\beta} | \bm{\beta}^{(t)})$ \citep{Friedman:2010cg}.  
\end{remark}

\section{Monte Carlo simulation}\label{sec:simulation}
\subsection{Setting}
We consider the following two simulation models as follows:
\begin{eqnarray*}
\hspace{-30mm} \mbox{\bf Model 1: }&& \bm{\beta}_0 = (1,1,1)^T,\\
\hspace{-30mm} \mbox{\bf Model 2: }&& \bm{\beta}_0 = (\underbrace{0.5, \cdots, 0.5}_{\mbox{\small 6}},  \underbrace{0 , \cdots , 0}_{\mbox{\small 45}} )^T.
\end{eqnarray*}
The number of observations is set to be $n=200$.  For each model, we generate $T=10000$ datasets of predictors $\bm{x}_i$ ($i=1,\dots,n$) according to $N(\bm{0}, (1-\rho)\bm{I} + \rho \bm{1}\bm{1}^T)$.  Here, we consider the case of $\rho=0.0$ and $\rho = 0.6$.  Responses $y_i$ are generated from the mixture distribution
\begin{eqnarray*}
(1-\delta) f(y|\bm{x}_i;\bm{\beta}_0) + \delta q(y) \ \  (i=1,\dots,n),	
\end{eqnarray*}
where $f(y|\bm{x};\bm{\beta}_0)$ is a density function corresponding to the LPRE defined as (\ref{hm:LPRE_pdf}), $q(y)$ is a density function of distribution of outliers, and $\delta$  ($0 \leq \delta < 1$) is an outlier ratio.  The outlier ratio is set to be $\delta =0,$ $0.05,$ $0.1,$ and $0.2$ in this simulation.  We assume that $q(y)$ follows a log-normal distribution (pdf: $q(y) = 1/(\sqrt{2\pi}y\sigma)\exp\{ -(\log y - \mu)^2 /(2\sigma^2)\}$) with $(\mu, \sigma) = (\pm 5,1)$.  When $\mu =5$, the outliers take extremely large values.  On the other hand, when $\mu =-5$, the data values of outliers are nearly zero.  

\subsection{Investigation of relative prediction error and mean squared error of the estimator}
To investigate the performance of our proposed procedure, we use the relative prediction error (RPE) and the mean square error (MSE) for the $t$th dataset, defined as
\begin{eqnarray}
RPE(t) &=& \sum_{i=1}^n \frac{[y_i^{\rm new}(t) - \exp\{\bm{x}_i(t)^T\hat{\bm{\beta}}(t)\} ]^2}{y_i^{\rm new}(t)\exp\{\bm{x}_i(t)^T\hat{\bm{\beta}}(t)\}}, \label{eq:RPE} \\
MSE(t) &=& \|\hat{\bm{\beta}}(t) - \bm{\beta}_0  \|^2, \label{eq:MSE}
\end{eqnarray}
respectively, 
where $\hat{\bm{\beta}}(t)$ is an estimator obtained from the dataset $\{(\bm{x}_i(t),y_i(t)); \ i=1,\dots,n\}$, and $y_i^{\rm new}(t)$ is an observation from $y_i^{\rm new}(t)|\bm{x}_i(t)$.  Here, $y_i^{\rm new}(t)|\bm{x}_i(t)$ follows a distribution of $f(y|\bm{x}_i(t);\bm{\beta}_0)$ and is independent of $y_i(t)|\bm{x}_i(t)$. Figure \ref{fig:error} shows the median and error bar of \{RPE(1), $\dots$, RPE($T$)\} and \{MSE(1), $\dots$, MSE($T$)\}.  The error bars are delineated by the  25th and 75th percentiles.   
\begin{figure}[H]
\centering
\includegraphics[width=14 cm]{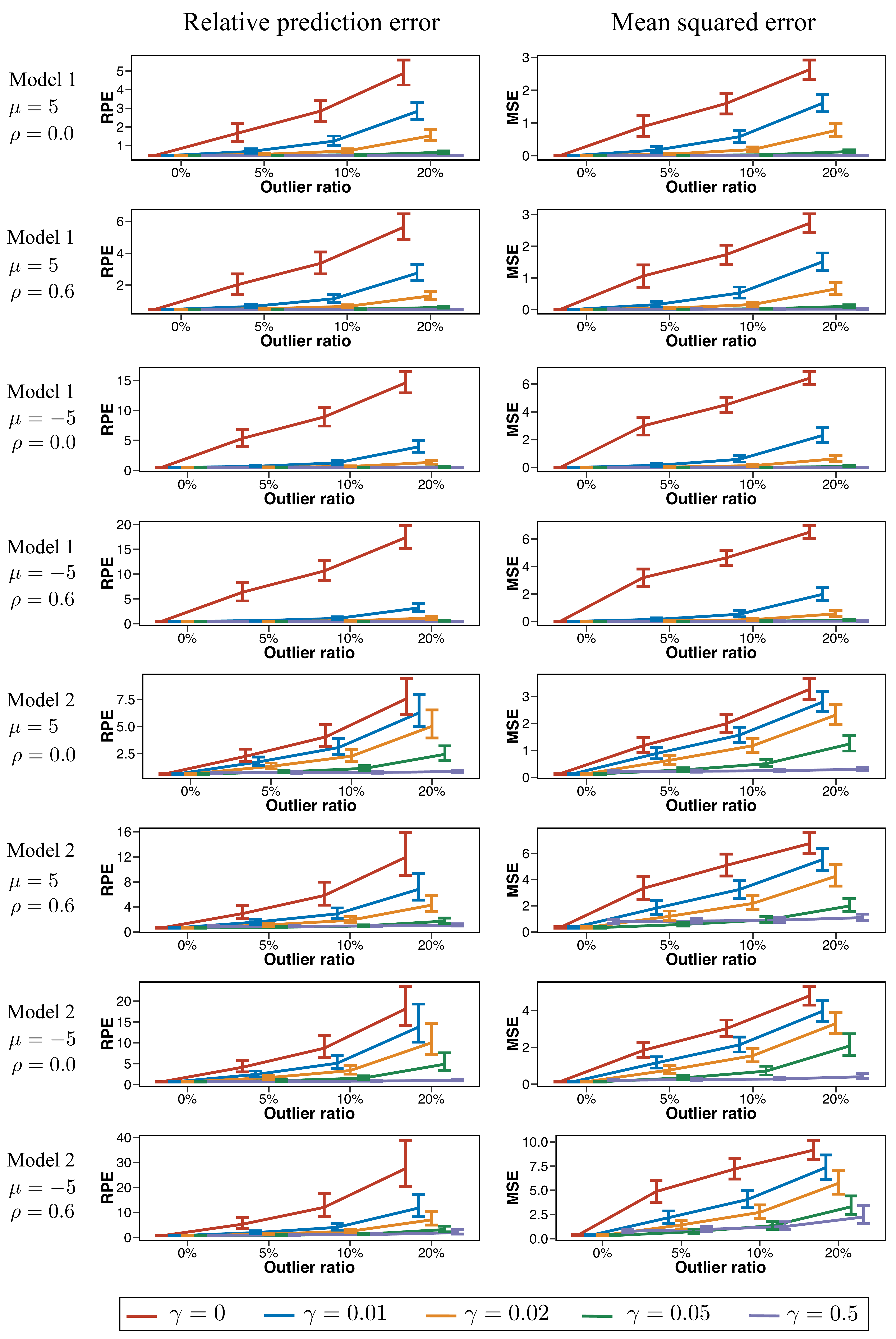}
\caption{Median and error bar of relative prediction error (RPE) in (\ref{eq:RPE}) and mean squared error (MSE) of $\bm{\beta}$ in (\ref{eq:MSE}) when parameters of the log-normal distribution (distribution of outliers) are $(\mu,\sigma)=(\pm5,1)$.  The error bars are delineated  by 25th and 75th percentiles.}
\label{fig:error}
\end{figure}   
We observe the following tendencies from the results in Figure \ref{fig:error}:
\begin{itemize}
	\item As the outlier ratio increases, the performance becomes worse in all cases.  Interestingly, the length of the error bar of RPE increases as the outlier ratio increases.   
	\item The proposed method becomes robust against outliers as the value of $\gamma$ increases.  We observe that a too large $\gamma$, such as $\gamma=10$, leads to extremely poor RPE and MSE, because most observations are regarded as outliers.  Therefore, the not too large $\gamma$, such as the $\gam=0.5$ used here, generally results in better estimation accuracy than the MLE.   
	\item The cases of $\rho = 0.6$, where the predictors are correlated, are worse than those of $\rho=0$.  Particularly, when $\gamma=0$, the value of RPE of $\rho=0.6$ becomes large on the large outlier ratio.   Still, increasing $\gam$ has led to better estimation performance uniformly.
	\item The results for different simulation models on the same value of $\gamma$ are generally different, which implies the appropriate value of $\gamma$ may change according to the data generating mechanisms.  
\end{itemize}

\subsection{Investigation of asymptotic distribution}
The asymptotic distribution is derived under the assumption that the true distribution of $y|\bm{x}_i$ follows $f(y|\bm{x}_i;\bm{\beta}_0)$, that is, $\delta = 0$.  However, we expect that, when $\gamma$ is sufficiently large and $\del$ is moderate, the asymptotic distribution may approximate the true distribution well, a point underlined by  \citet[Theorem 5.1]{Fujisawa:2008dq} in the case of i.i.d. data. We investigate whether the asymptotic distribution given by \eqref{hm:thm1-an}  appropriately works when there exist outliers.   

Let $\bm{z} = (z_1,\dots,z_p)^T$ be
\begin{equation}
\bm{z} := \sqrt{n} \left\{ {\rm diag} \left( J_{\gam}^{-1}\Delta_{\gam} J_{\gam}^{-1} \right)\right\}^{-\frac{1}{2}}\left(\hat{\bm{\beta}}_{\gam}-\bm{\beta}_0 \right).
\nn
\end{equation}
\eqref{hm:thm1-an} implies that   
\begin{equation}
z_j  \cil N\left(0,\,1\right), \quad (j=1,\dots,p).
\nn
\end{equation}
We expect that the histogram of $z_j$ obtained by the simulation would approximate the density function of the standard normal distribution when there are no (or a few) outliers.  When there exists a number of outliers, the asymptotic distribution of $z_j$ may not be $N(0,1)$ but is expected to be close to $N(0,1)$ for large $\gamma$.    Figure \ref{fig:histmu5} shows the histograms of $T=10000$ samples of $z_2$ along with the density function of the standard normal distribution for  $\mu=5$ in Model 1. 
\begin{figure}[H]
\centering
\includegraphics[width=14 cm]{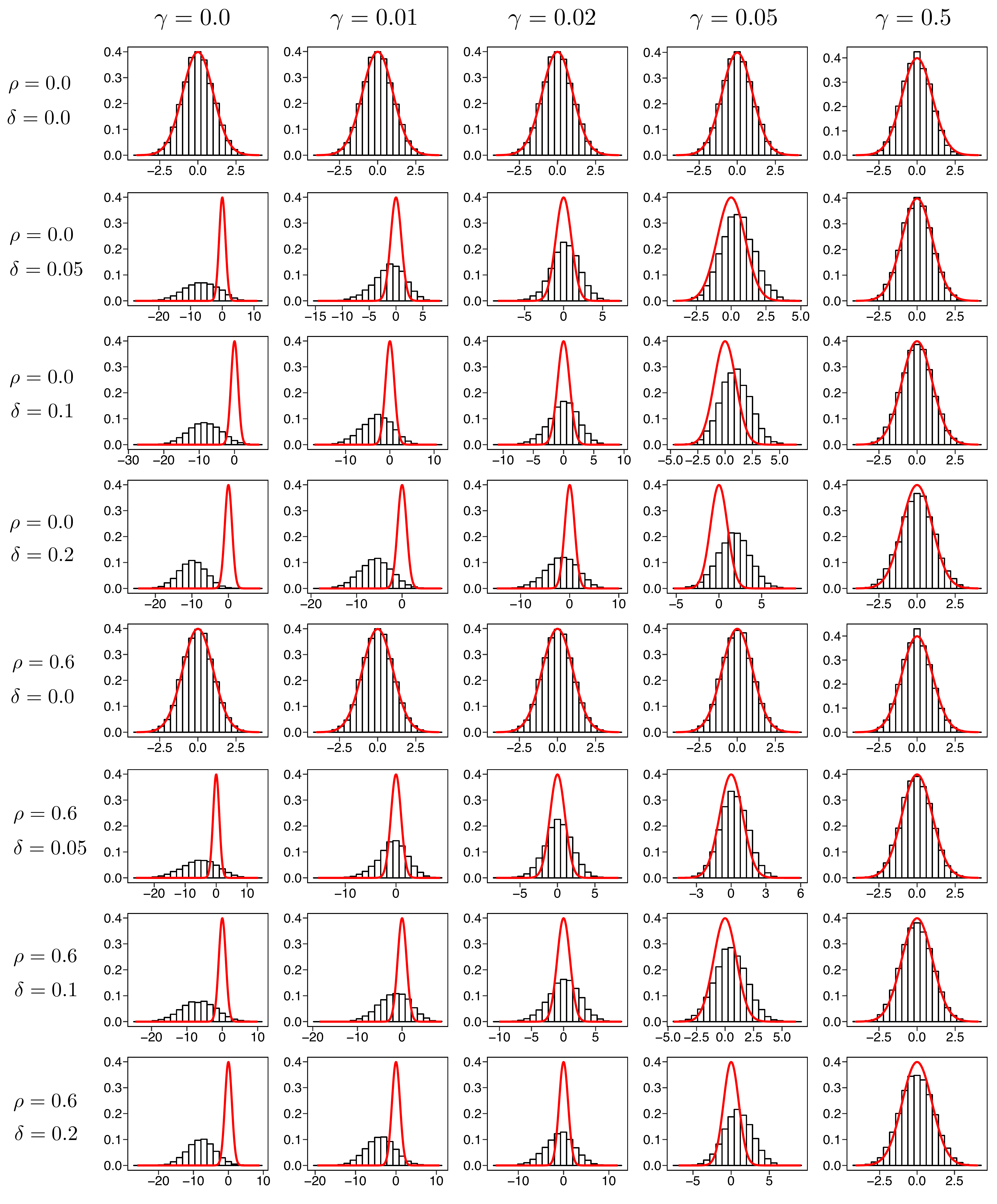}
\caption{Histograms of $T=10000$ samples of $z_2$ along with the density function of standard normal distribution for  $\mu=5$ in Model 1. }
\label{fig:histmu5}
\end{figure}   
When there are no outliers, the distribution of $z_2$ is close to the standard normal distribution whatever the value of $\gamma$ is selected.  When the outlier ratio is large, the histogram of $z_2$ is far from the density function of $N(0,1)$ for a small $\gamma$.  However, when the value of $\gamma$ is large, the histogram of $z_2$ is close to the density function of $N(0,1)$, which implies the asymptotic distribution in \eqref{hm:thm1-an} appropriately approximates the distribution of estimators even when there exist outliers.   We observe that the result of the asymptotic distributions for other $z_j$s shows a similar tendency to that of $z_2$.

\section{Real data analysis}\label{sec:real data analysis}
We apply the proposed method to electricity consumption data from the UCI Machine Learning repository \citep{Dua:2017}, which is available to download at \url{https://archive.ics.uci.edu/ml/datasets/ElectricityLoadDiagrams20112014}.  The dataset consists of 370 household electricity consumption observations from January 2011 to December 2014.  The electricity consumption is in kWh at 15 minutes intervals.  We consider the problem of prediction of the electricity consumption for next day by using past electricity consumption.  The prediction of the day ahead electricity consumption is needed when we trade electricity on markets, such as the European Power Exchange (EPEX) day ahead market (\url{https://www.epexspot.com/en/market-data/dayaheadauction}). 

To investigate the effectiveness of the proposed procedure, we choose one household that includes small positive values of electricity consumption.  The consumption data for December 25, 2014 were deleted, because the corresponding data values are zero.  We predict the electricity consumption from January 2012 to December 2014 (the data in 2011 are used only for estimating the parameter).  The actual electricity consumption data from January 2012 to December 2014 are depicted in Figure \ref{fig:rawdata}.
\begin{figure}[H]
\centering
\includegraphics[width=14 cm]{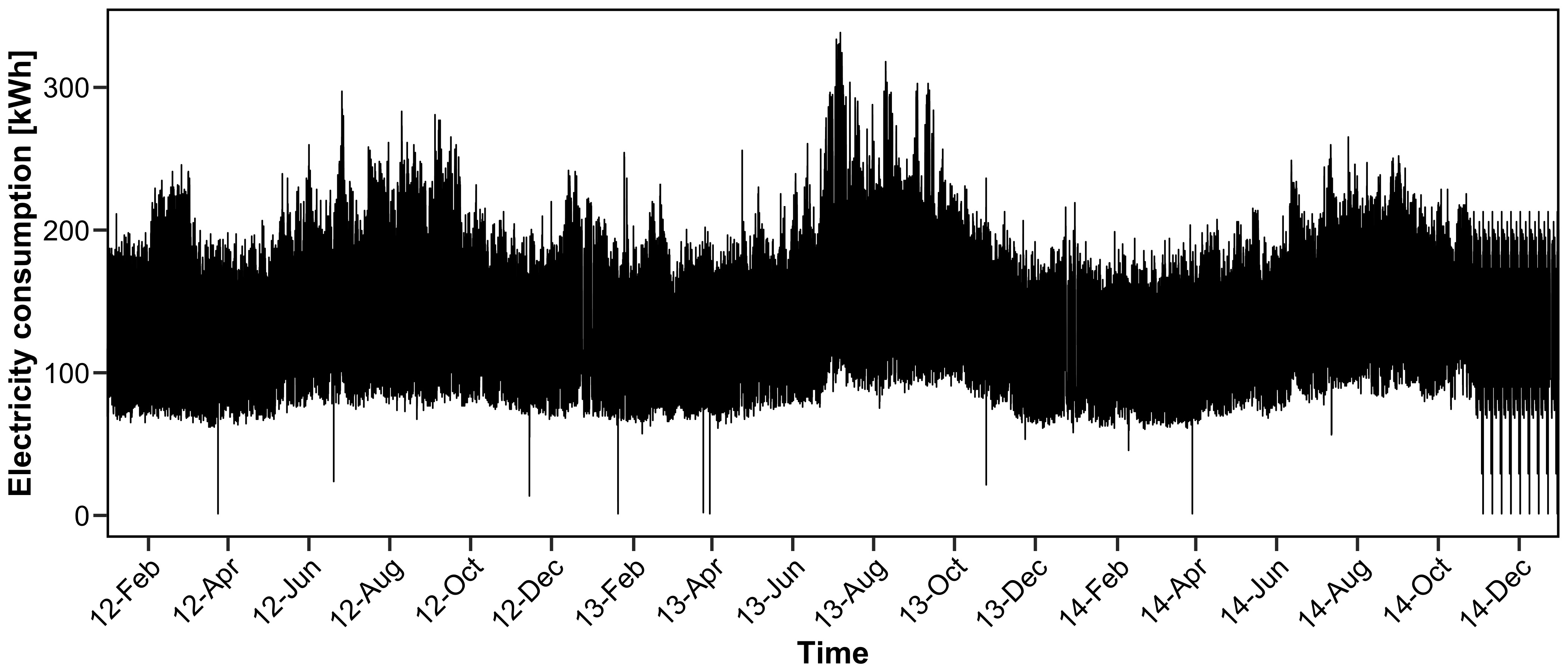}
\caption{Electricity consumption from January 2012 to December 2014 for one of the 370 households.}
\label{fig:rawdata}
\end{figure}   
Several data values are close to zero.  Particularly, from October to December 2014, exist several spikes that attain nearly zero values.  In this case, the estimation accuracy is poor with ordinary GRE criteria, as shown in our numerical simulation in the previous section.  

We assume the multiplicative regression model in (\ref{eq:multiplicative regression model}) to predict electricity consumption.  Let $y_t$ denote the electricity consumption at $t$ ($t=1,\dots,T$).  The number of observations is $T = (365\times 3 + 366 -1)\times 96 = 146,160 $.  Here, 96 is the number of measurements in one day, because electricity demand is expressed in 15 minutes intervals.  We define $\bm{x}_t$ as $\bm{x}_t = (y_{t-d},\dots,y_{t-dq})^T$, where $d=96$.  In our model, the electricity consumption at $t$ is explained by the electricity consumption of the past $q$ days for the same period.  We set $q=5$ for data analysis and use past $n=100$ days of observations to estimate the model.  

The model parameters are estimated by robust LPRE.  The values of $\gamma$ are set to be regular sequences from 0 to 0.1, with increments of 0.01.  To minimize the negative $\gamma$-likelihood function, we apply our proposed MM algorithm.  As the electricity consumption pattern on weekdays is known to be completely different from that on weekends, we make predictions for weekdays and weekends separately.  The results of the relative prediction error are depicted in Figure \ref{fig:loss_realdata}.
\begin{figure}[H]
\centering
\includegraphics[width=14 cm]{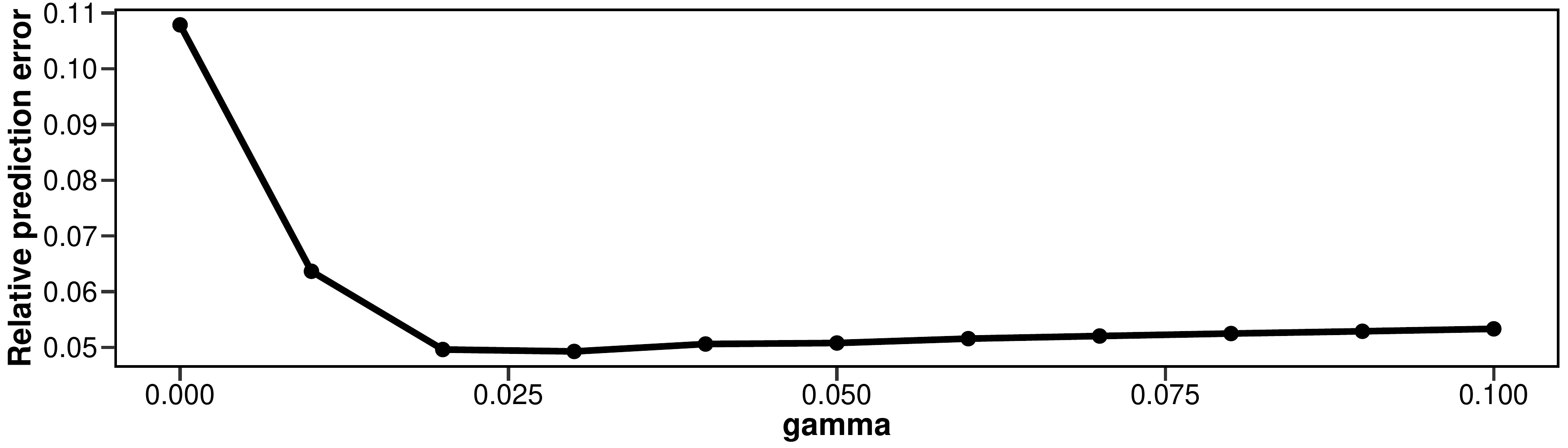}
\caption{Relative prediction error for various values of $\gamma$ for household electricity consumption data.}
\label{fig:loss_realdata}
\end{figure}
  The relative prediction error is large when $\gamma=0$ (i.e., ordinary LPRE estimation).  The minimum value of relative prediction error is 0.049 and the corresponding value of $\gamma$ is $\gamma=0.03$.  When we set a too large value of $\gamma$, efficiency decreases and the relative prediction error might increase.   

\begin{figure}[H]
\centering
\includegraphics[width=14 cm]{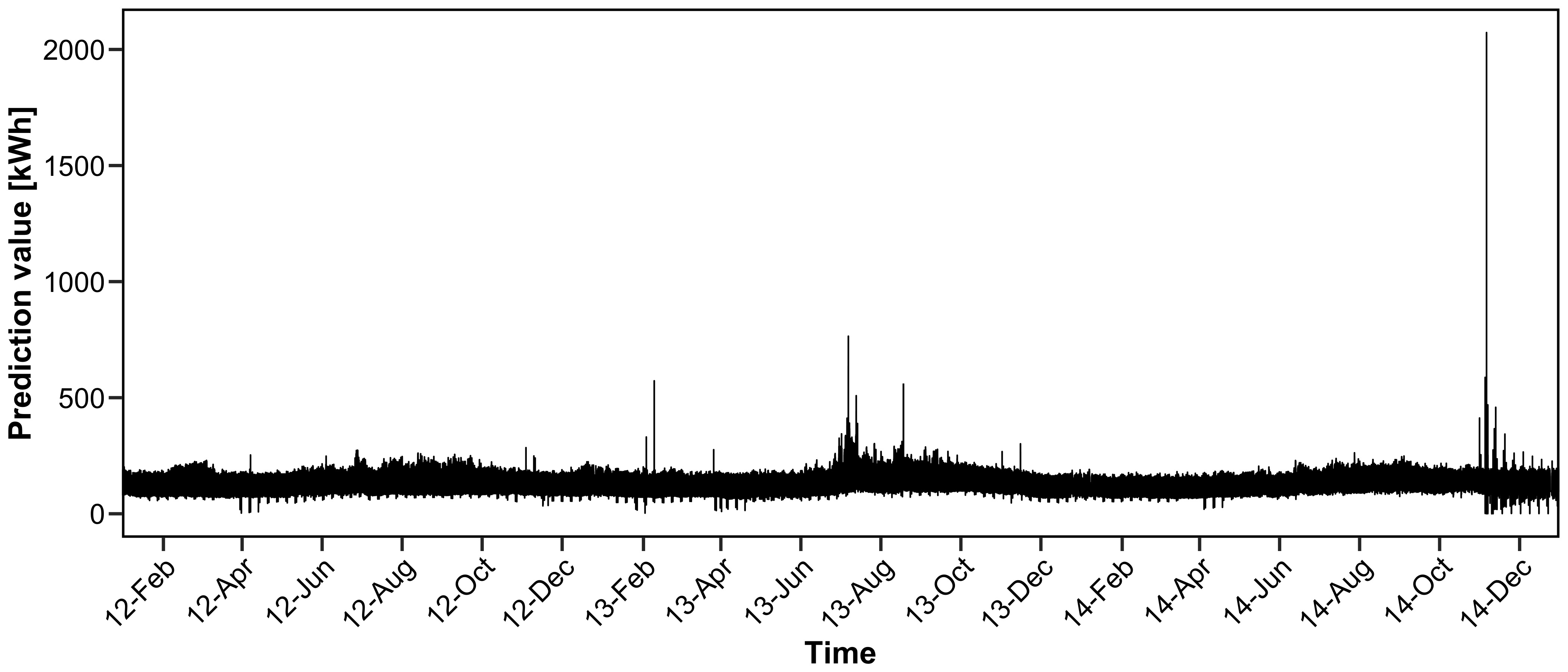}
\caption{Prediction value based on LPRE loss for household electricity consumption data.}
\label{fig:prediction_gam0}
\end{figure}   
Figure \ref{fig:prediction_gam0} shows the prediction value when $\gamma=0$.  We observe there exist several extremely large prediction values (e.g., July 8, 2013 and November 6, 2014) due to the model parameters, which are heavily affected by the nearly zero values of electricity consumption.   

\begin{figure}[H]
\centering
\includegraphics[width=14 cm]{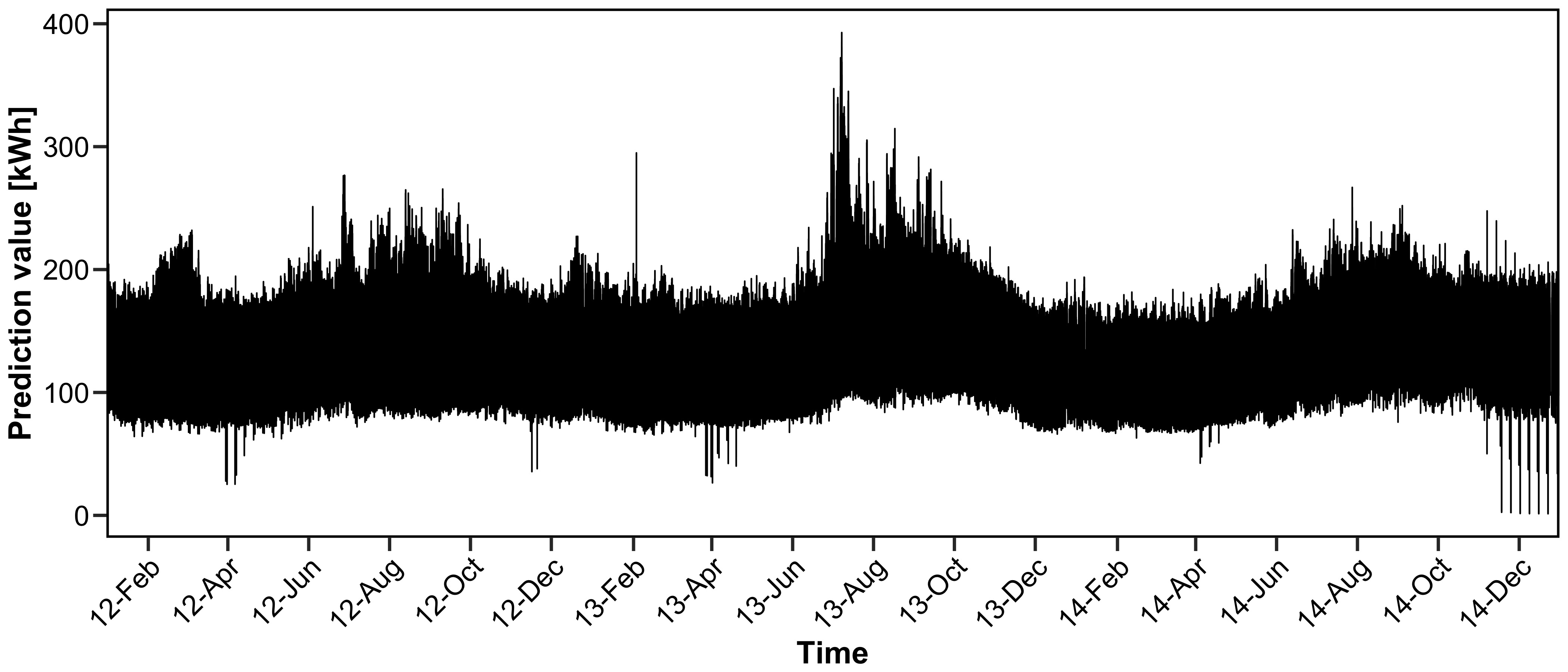}
\caption{Prediction value based on the proposed method with $\gamma=0.03$ for household electricity consumption data.}
\label{fig:prediction_gam003}
\end{figure}   
Figure \ref{fig:prediction_gam003} shows the prediction values when $\gamma=0.03$.  Extremely large prediction values are not observed and the prediction values are similar to the actual electricity demand in Figure \ref{fig:rawdata}.  Therefore, our proposed procedure is robust against outliers.   

Furthermore, to investigate the stableness of the MM algorithm described in section \ref{sec:algorithm}, we also apply the BFGS method to obtain the minimizer of the negative $\gamma$-likelihood function.  The {\tt optim} function in {\tt R} is used to implement the BFGS method.  With the BFGS method, relative prediction errors diverge when $\gamma \geq 0.03$.  Consequently, the MM algorithm is more stable than the BFGS algorithm for this dataset.

\section{Discussion}\label{sec:discussion}
We proposed a relative error estimation procedure that is robust against outliers.  The proposed procedure is based on the $\gamma$-likelihood function, which is constructed by $\gamma$-cross entropy \citep{Fujisawa:2008dq}.  Our simulation results showed the proposed method performed better than the ordinary relative error estimation procedures in terms of prediction accuracy.  Furthermore, the asymptotic distribution of the estimator yielded a good approximation, with an appropriate value of $\gamma$, even when there existed outliers.  The proposed method was applied to electricity consumption data, which included small positive values.  Although the ordinary LPRE was sensitive to small positive values, our method was able to appropriately eliminate the negative effect of these values.

As shown in Remark \ref{remark:lasso}, our method may be extended to $L_1$ regularization.  An important point in the regularization procedure is the selection of a regularization parameter. \citet{Hao:2016ei} suggested using the BIC-type criterion of Wang {\it et al}., \citep{Wang:2007fi, Wang:2009ja} for the ordinary LPRE estimator.  As a future research topic, it would be interesting to consider the problem of regularization parameter selection in high-dimensional robust relative error estimation.

\section*{Acknowledgments}
This work was partially supported by the Japan Society for the Promotion of Science KAKENHI 15K15949, and the Center of Innovation Program (COI) from JST, Japan (K. Hirose), and JST CREST Grant Number JPMJCR14D7 (H. Masuda).

%%%%%
%%%%%

\appendix

\section{Proof of Theorem \ref{hm:thm1}}
\label{hm:sec_proof}

All the asymptotic will be taken under $n\to\infty$.
We write $a_n\lesssim b_n$ if there exists a positive universal constant such that $a_n\le cb_n$ for every $n$ large enough.
For any random functions, $X_n$ and $X_0$ on $\overline{\mcb}$ we denote $X_n(\bbeta)\ucip X_0(\bbeta)$ if $\sup_{\bbeta\in\overline{\mcb}}|X_n(\bbeta)-X_0(\bbeta)| \cip 0$; below, we will simply write $\sup_{\bbeta}$ for $\sup_{\bbeta\in\overline{\mcb}}$.

First, we state a preliminary lemma, which will be repeatedly used in the sequel.

\begin{lemma}%[Uniform law of large numbers]
\label{hm:lem-1}
Let $\eta(\bm{x};\bm{\beta})$ and $\zeta(\bm{x},y;\bm{\beta})$ be vector-valued measurable functions satisfying that
\begin{align}
\sup_{\bbeta}\max_{k\in\{0,1\}}\left|\p_{\bbeta}^{k}\eta(\bm{x};\bm{\beta})\right| &\le \overline{\eta}(\bm{x}), \nn\\
\sup_{\bbeta}\max_{k\in\{0,1\}}\left|\p_{\bbeta}^{k}\zeta(\bm{x},y;\bm{\beta})\right| &\le \overline{\zeta}(\bm{x},y)
\nonumber
\end{align}
for some $\overline{\eta}$ and $\overline{\zeta}$ such that
\begin{equation}
\overline{\eta} + \int_0^\infty \overline{\zeta}(\cdot,y) dy \in \bigcap_{q>0}L^{q}(\pi)
\nonumber
\end{equation}
for every $q>0$. Then, 
\begin{align}
\frac{1}{n}\sumi \eta(\bm{x}_i;\bm{\beta}) &\ucip \int_{\mcx} \eta(\bm{x};\bm{\beta}) \pi(d\bm{x}), \label{hm:lem-1-1}\\
\frac{1}{n}\sumi \zeta(\bm{x}_i,y_i;\bm{\beta}) &\ucip \int_0^\infty \int_{\mcx} \zeta(\bm{x},y;\bm{\beta}) f(y|\bm{x},\bbeta_0)\pi(d\bm{x})dy.
\label{hm:lem-1-2}
\end{align}
\end{lemma}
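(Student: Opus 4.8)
The plan is to prove \eqref{hm:lem-1-1} and \eqref{hm:lem-1-2} as a uniform-in-$\bbeta$ law of large numbers over the compact closure $\overline{\mcb}$ by combining pointwise (finite-dimensional) convergence at each fixed $\bbeta$ with stochastic equicontinuity, the latter being supplied almost for free by the first-order derivative bounds assumed on $\eta$ and $\zeta$. Throughout I would exploit that $\overline{\mcb}$ is compact, so that a uniform statement follows from pointwise convergence on a finite $\ep$-net together with a modulus-of-continuity estimate.

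I would first record the equicontinuity ingredient. Since $\sup_{\bbeta}|\p_{\bbeta}\eta(\bm{x};\bbeta)|\le\overline{\eta}(\bm{x})$, the map $\bbeta\mapsto\eta(\bm{x};\bbeta)$ is $\overline{\eta}(\bm{x})$-Lipschitz, whence $\bbeta\mapsto n^{-1}\sumi\eta(\bm{x}_i;\bbeta)$ is Lipschitz with random constant $n^{-1}\sumi\overline{\eta}(\bm{x}_i)$, which is $O_p(1)$ by Assumption \ref{hm:A-x}; the limiting map $\bbeta\mapsto\int\eta(\bm{x};\bbeta)\pi(d\bm{x})$ inherits the same bound. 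The identical reasoning applies to $\zeta$ with the envelope $\int_0^\infty\overline{\zeta}(\cdot,y)\,dy$. Thus both the empirical averages and their candidate limits are equi-Lipschitz in $\bbeta$ with $O_p(1)$ moduli, reducing each claim to pointwise convergence on a fixed finite net, followed by letting the mesh tend to zero.

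For the pointwise statement underlying \eqref{hm:lem-1-1} I would invoke the predictor law of large numbers in Assumption \ref{hm:A-x}. The only subtlety is that $\eta(\cdot;\bbeta)$ need not itself satisfy the polynomial-times-exponential growth bound attached to that assumption; I would handle this by truncating on $\{|\bm{x}|\le R\}$, where the growth bound is automatic, and bounding the tail $n^{-1}\sumi\overline{\eta}(\bm{x}_i)\mathbf{1}\{|\bm{x}_i|>R\}$ via Cauchy--Schwarz using $\overline{\eta}\in\bigcap_{q>0}L^{q}(\pi)$ and the exponential-moment control $n^{-1}\sumi|\bm{x}_i|^{3}\exp(\del'|\bm{x}_i|^{1+\del})=O_p(1)$, then sending $R\to\infty$. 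For \eqref{hm:lem-1-2} I would decompose $\zeta(\bm{x}_i,y_i;\bbeta)=\tilde{\zeta}(\bm{x}_i;\bbeta)+D_i(\bbeta)$ with conditional mean $\tilde{\zeta}(\bm{x};\bbeta):=\int_0^\infty\zeta(\bm{x},y;\bbeta)f(y|\bm{x};\bbeta_0)\,dy$ and remainder $D_i(\bbeta):=\zeta(\bm{x}_i,y_i;\bbeta)-\tilde{\zeta}(\bm{x}_i;\bbeta)$. The mean part is a function of $\bm{x}_i$ alone with envelope $\int_0^\infty\overline{\zeta}(\cdot,y)\,dy$, so \eqref{hm:lem-1-1} already delivers its limit $\iint\zeta(\bm{x},y;\bbeta)f(y|\bm{x};\bbeta_0)\pi(d\bm{x})\,dy$. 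Because the model specifies $\mcl(y_i\mid(\bm{x}_j)_{j\le i},(y_j)_{j<i})=f(\cdot|\bm{x}_i;\bbeta_0)$, the array $\{D_i(\bbeta)\}$ consists of martingale differences with respect to the natural filtration, and I would show $n^{-1}\sumi D_i(\bbeta)\cip0$ from the conditional-variance bound $n^{-2}\sumi E[D_i(\bbeta)^2\mid\mathcal{H}_{i}]=O_p(1/n)$, which follows from the envelope integrability together with the moment bound of Assumption \ref{hm:A-x}.

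The main obstacle I anticipate is the \emph{uniform} treatment of the martingale remainder under the serial dependence of $(\bm{x}_i,y_i)$: one cannot appeal to an off-the-shelf i.i.d.\ uniform law of large numbers, so the argument must glue the pointwise martingale convergence at the net points to the equicontinuity estimate (the $D_i(\bbeta)$ being themselves equi-Lipschitz through the $\p_{\bbeta}$-bounds on $\zeta$ and $\tilde{\zeta}$), and must simultaneously justify the truncation that brings Assumption \ref{hm:A-x} to bear on envelopes satisfying only a moment, rather than a pointwise-growth, condition. Once these two points are in place, assembling pointwise convergence on a countable dense subset with the $O_p(1)$ Lipschitz moduli yields the uniform convergences \eqref{hm:lem-1-1} and \eqref{hm:lem-1-2}.
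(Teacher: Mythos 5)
Your overall architecture matches the paper's: you use the identical decomposition of $\zeta(\bm{x}_i,y_i;\bbeta)$ into its conditional mean $\tilde{\zeta}(\bm{x}_i;\bbeta)=\int_0^\infty\zeta(\bm{x}_i,y;\bbeta)f(y|\bm{x}_i;\bbeta_0)\,dy$ (a predictor-only average, handled through Assumption \ref{hm:A-x}) plus a martingale-difference remainder, and you treat \eqref{hm:lem-1-1} as the degenerate case of \eqref{hm:lem-1-2}. Where you genuinely diverge is in how uniformity over $\overline{\mcb}$ is obtained. The paper controls the supremum of the ($\sqrt{n}$-normalized) martingale part by a Sobolev embedding inequality, $E(\sup_{\bbeta}|M_n(\bbeta)|^{q})\lesssim\sup_{\bbeta}E\{|M_n(\bbeta)|^{q}\}+\sup_{\bbeta}E\{|\p_{\bbeta}M_n(\bbeta)|^{q}\}$ for $q>p$, together with Burkholder's inequality applied to both $M_n$ and $\p_{\bbeta}M_n$; this yields the stronger statement $\sup_{\bbeta}|M_n(\bbeta)|=O_p(1)$, i.e.\ a uniform $O_p(n^{-1/2})$ rate for the martingale part. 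For the Cesàro part it uses pointwise convergence plus tightness of $\{C_n\}$ in $C(\overline{\mcb})$ via $\sup_n E(\sup_{\bbeta}|\p_{\bbeta}C_n(\bbeta)|)<\infty$. Your route---finite $\ep$-net plus equi-Lipschitz moduli, with pointwise Chebyshev/conditional-variance bounds for the martingale differences---is more elementary (no Sobolev, no Burkholder), delivers only $o_p(1)$ rather than a rate, but that is all the lemma asserts; your net-plus-modulus argument for the mean part is essentially the paper's tightness argument in different clothing. Both are legitimate.

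There is, however, one step in your write-up that does not close as stated: the truncation. You correctly spot that the lemma's hypothesis ($\overline{\eta}\in\bigcap_{q>0}L^{q}(\pi)$) does not match the pointwise growth condition required by Assumption \ref{hm:A-x} (a mismatch the paper silently ignores), but your fix is circular. The Cauchy--Schwarz bound on the tail requires $\frac1n\sumi\overline{\eta}(\bm{x}_i)^2=O_p(1)$, and likewise your ``$O_p(1)$ Lipschitz moduli'' require $\frac1n\sumi\overline{\eta}(\bm{x}_i)=O_p(1)$; but membership of $\overline{\eta}$ in every $L^{q}(\pi)$ controls integrals against the \emph{limit} measure $\pi$, not empirical averages over the serially dependent $\bm{x}_i$, whose marginal laws need not be $\pi$. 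The only tool linking the two is precisely the law of large numbers in Assumption \ref{hm:A-x}, which applies only to functions obeying the growth bound $|\eta(\bm{x})|\lesssim(1+|\bm{x}|^{3})\exp(\del'|\bm{x}|^{1+\del})$---the very hypothesis the truncation was introduced to avoid. (The truncated piece $\eta\mathbf{1}\{|\bm{x}|\le R\}$ also needs local boundedness of $\eta$, which measurability plus an $L^q(\pi)$ envelope does not give.) So either you additionally assume the envelopes satisfy the growth bound---in which case Assumption \ref{hm:A-x} applies directly and no truncation is needed---or the argument does not go through. Since every application of the lemma in the paper involves integrands of polynomial-times-exponential-linear type that do satisfy the growth bound (and the paper's own proof tacitly relies on this, both for $C_n(\bbeta)\cip0$ and for the finiteness of the Burkholder moment bounds), the honest resolution is to state that extra hypothesis rather than to truncate.
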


\begin{proof}
\eqref{hm:lem-1-1} is a special case of \eqref{hm:lem-1-2}, hence we only show the latter.
Observe that
\begin{align}
& \sup_{\bbeta}\left| \frac{1}{n}\sumi \zeta(\bm{x}_i,y_i;\bm{\beta}) - \iint \zeta(\bm{x},y;\bm{\beta}) f(y|\bm{x},\bbeta_0)\pi(d\bm{x})dy \right| \nn\\
&\le \frac{1}{\sqrt{n}} \sup_{\bbeta}\left| \sumi \frac{1}{\sqrt{n}} \left( \zeta(\bm{x}_i,y_i;\bm{\beta}) - \int \zeta(\bm{x}_i,y;\bm{\beta}) f(y|\bm{x}_i,\bbeta_0)dy\right) \right| \nn\\
&{}\qquad 
+ \sup_{\bbeta}\left| \frac{1}{n}\sumi \int \zeta(\bm{x}_i,y;\bm{\beta}) f(y|\bm{x}_i,\bbeta_0)dy - \iint \zeta(\bm{x},y;\bm{\beta}) f(y|\bm{x},\bbeta_0)\pi(d\bm{x})dy \right| \nn\\
&=: \frac{1}{\sqrt{n}} \sup_{\bbeta}\left| M_n(\bm{\beta}) \right| + \sup_{\bbeta}\left| C_n(\bm{\beta}) \right|.
\nn
\end{align}
For the first term, let us recall the Sobolev inequality \citep[Section 10.2]{Fri06}:
\begin{equation}
E\left(\sup_{\bbeta}\left| M_n(\bbeta)\right|^{q}\right)
\lesssim \sup_{\bbeta}E\left\{|M_n(\bbeta)|^{q}\right\} + \sup_{\bbeta}E\left\{|\p_{\bbeta}M_n(\bbeta)|^{q}\right\}
\label{hm:sobolev.ineq}
\end{equation}
for $q>p$. The summands of $M_n(\bbeta)$ trivially form a martingale difference array with respect to the filtration $\mcf_j := \sig(\bm{x}_i;\, i\le j)$, $j\in\mbbn$:
since we are assuming that the conditional distribution of $y_i$ given $\{(\bm{x}_i,\bm{x}_{i-1},\bm{x}_{i-2},\dots),\,(y_{i-1},y_{i-2},\dots)\}$ equals that given $\bm{x}_i$ (Section \ref{hk:sec_gam-div}), each summand of $M_n(\bbeta)$ equals $\frac{1}{\sqrt{n}} \left( \zeta(\bm{x}_i,y_i;\bm{\beta}) - E\{\zeta(\bm{x}_i,y;\bm{\beta})|\,\mcf_{j-1}\}\right)$.
Hence, by means of the Burkholder's inequality for martingales, we obtain, for $q>p\vee 2$,
\begin{align}
\sup_{\bbeta}E\left\{|M_n(\bbeta)|^{q}\right\} &\lesssim 
\sup_{\bbeta}\frac{1}{n} \sumi E\left\{\left|
\left( \zeta(\bm{x}_i,y_i;\bm{\beta}) - \int \zeta(\bm{x}_i,y;\bm{\beta}) f(y|\bm{x}_i,\bbeta_0)dy\right)
\right|^{q}\right\} < \infty.
\nonumber
\end{align}
We can take the same route for the summands of $\p_{\bbeta}M_n(\bbeta)$ to conclude that $\sup_{\bbeta}E\left\{|\p_{\bbeta}M_n(\bbeta)|^{q}\right\}<\infty$.
These estimates, combined with \eqref{hm:sobolev.ineq}, lead to the conclusion that
\begin{equation}
\sup_{\beta}\left| M_n(\bm{\beta}) \right|=O_p(1).
\nonumber
\end{equation}
As for the other term, we have $C_n(\bm{\beta}) \cip 0$ for each $\bm{\beta}$ and also
\begin{equation}
\sup_n E\left( \sup_{\bm{\beta}}\left|\p_{\bm{\beta}}C_n(\bm{\beta})\right|  \right)<\infty.
\nonumber
\end{equation}
The latter implies the tightness of the family $\{C_n(\bm{\beta})\}_n$ of continuous random functions on the compact set $\overline{\mcb}$, thereby entailing that $C_n(\bm{\beta}) \ucip 0$. The proof is complete.
\end{proof}

\subsection{Consistency}

Let $f_i(\bm{\beta}) := f(y_i| \bm{x}_{i};\bm{\beta})$ for brevity and
\begin{align}
A_{\gam,n}(\bm{\beta}) &:= \frac{1}{n}\sum_{i=1}^n f_i(\bm{\beta})^{\gamma}, \nn\\
\overline{A}_{\gam,n}(\bm{\beta}) &:= \frac{1}{n}\sum_{i=1}^n \int f(y|\bm{x}_i;\bm{\beta})^{\gamma+1}dy
=C(\gam,h) \frac{1}{n}\sum_{i=1}^n \exp(- \gam \bm{x}_i^{T}\bbeta).
\nonumber
\end{align}
By means of Lemma \ref{hm:lem-1}, we have
\begin{align}
A_{\gam,n}(\bm{\beta}) &\ucip A_\gam(\bm{\beta}) := \iint f(y|\bm{x};\bm{\beta})^{\gam}f(y|\bm{x},\bbeta_0)\pi(d\bm{x})dy, \nn\\
\overline{A}_{\gam,n}(\bm{\beta}) &\ucip \overline{A}_{\gam}(\bm{\beta}) := \iint f(y|\bm{x};\bm{\beta})^{\gam+1}\pi(d\bm{x})dy
=C(\gam,h) \int \exp(- \gam \bm{x}^{T}\bbeta)\pi(d\bm{x}).
\nonumber
\end{align}
Since $\inf_{\bbeta}\left\{A_\gam(\bm{\beta})\wedge \overline{A}_{\gam}(\bm{\beta})\right\} >0$, we see that taking the logarithm preserves the uniformity of the convergence in probability:
for the $\gam$-likelihood function \eqref{gamma-div}, it holds that
\begin{equation}
\ell_{\gamma,n}(\bm{\beta}) \ucip \ell_{\gam,0}(\bm{\beta}) := -\frac{1}{\gamma} \log \left\{ A_\gam(\bm{\beta}) \right\} 
+ \frac{1}{1+\gamma} \log \left\{ \overline{A}_{\gam}(\bm{\beta}) \right\}.
\label{hm:consistency.proof-1}
\end{equation}
The limit equals the $\gam$-cross entropy from $g(\cdot|\cdot)=f(\cdot|\cdot;\bbeta_0)$ to $f(\cdot|\cdot;\bbeta)$.
We have $\ell_{\gam,0}(\bm{\beta})\ge \ell_{\gam,0}(\bm{\beta}_0)$, the equality holding if and only if $f(\cdot|\cdot;\bbeta_0)=f(\cdot|\cdot;\bbeta)$ (see \citep[Theorem 1]{Kawashima:2017ho}).
By \eqref{hm:add.eq-1} the latter condition is equivalent to $\rho( e^{-\bm{x}^{T}\bbeta} y) = \rho( e^{-\bm{x}^{T}\bbeta_0} y)$, followed by $\bbeta=\bbeta_0$ from Assumption \ref{hm:A-iden}.
This, combined with \eqref{hm:consistency.proof-1} and the argmin theorem (cf. \citep[Chapter 5]{vdV98}), concludes the consistency $\hat{\bbeta}_\gam \cip \bbeta_0$.
[Note that we do not need Assumption \ref{hm:A-iden} if $\ell_{\gamma,n}$ is a.s. convex, which generally may not be the case for $\gam>0$.]

%\begin{Remark}
%\label{hm:remark_iden}
%...
%\end{Remark}

\subsection{Asymptotic normality}

First we note that Assumption \ref{hm:A-ve} ensures that, for every $\al>0$, there corresponds a function $\overline{F}_{\al} \in L^{1}\left(f(y|\bm{x},\bbeta_0)\pi(d\bm{x})dy\right)$ such that
\begin{equation}
\max_{k=0,1,2,3}\sup_{\bbeta} \left| \p_{\bbeta}^{k}\left\{f(y|\bm{x},\bbeta)^{\al}\right\}
\right| \le \overline{F}_{\al}(\bm{x},y).
\nonumber
\end{equation}
This estimate will enable us to interchange the order of $\p_{\bbeta}$ and the $dy$-Lebesgue integration, repeatedly used below without mention.

Let $s_i(\bm{\beta}) = s(y_i| \bm{x}_{i};\bm{\beta})$, and
\begin{align}
S_{\gam,n}(\bm{\beta}) &:= \frac{1}{n}\sum_{i=1}^n f_i(\bm{\beta})^{\gamma}s_i(\bm{\beta}), \nn\\
\overline{S}_{\gam,n}(\bm{\beta}) &:= \frac{1}{n}\sum_{i=1}^n \int f(y|\bm{x}_i;\bm{\beta})^{\gamma+1}s(y|\bm{x}_i;\bm{\beta})dy.
\nonumber
\end{align}
Then, the $\gam$-likelihood equation $\p_{\bbeta}\ell_{\gam,n}(\bbeta)=\bm{0}$ is equivalent to
\begin{equation}
\Psi_{\gam,n}(\bbeta):=\overline{A}_{\gam,n}(\bm{\beta})S_{\gam,n}(\bm{\beta}) - A_{\gam,n}(\bm{\beta})\overline{S}_{\gam,n}(\bm{\beta}) = \bm{0}.
\nonumber
\end{equation}
By the consistency of $\hat{\bbeta}_\gam$, we have $P(\hat{\bbeta}_\gam\in\mcb)\to 1$; hence $P\{\Psi_{\gam,n}(\hat{\bbeta}_\gam)=\bm{0}\}\to 1$ as well. Recall that $\mcb$ is open.
Therefore, virtually defining $\hat{\bbeta}_\gam$ to be $\bbeta_0\in\mcb$ if $\Psi_{\gam,n}(\hat{\bbeta}_\gam)=\bm{0}$ has no root, we may and do proceed as if $\Psi_{\gam,n}(\hat{\bbeta}_\gam)=\bm{0}$ a.s.
Because of the Taylor expansion
\begin{equation}
\left( -\int_0^1 \p_{\bbeta}\Psi_{\gam,n}\left(\bbeta_0 + s(\hat{\bbeta}_n-\bbeta_0)\right)ds  \right)
\sqrt{n}\left(\hat{\bm{\beta}}_{\gam}-\bm{\beta}_0 \right) = \sqrt{n}\Psi_{\gam,n}(\bbeta_0),
\nonumber
\end{equation}
to conclude \eqref{hm:thm1-an} it suffices to show that (recall the definitions \eqref{hm:D_def} and \eqref{hm:J_def})
\begin{align}
\sqrt{n}\Psi_{\gam,n}(\bbeta_0) &\cil N_p\left(\bm{0},\, \D_\gam \right), \label{hm:an.proof-1} \\
-\p_{\bbeta}\Psi_{\gam,n}(\hat{\bbeta}'_n) &\cip J_\gam \quad\text{for every $\hat{\bbeta}'_n \cip \bbeta_0$.}
\label{hm:an.proof-2}
\end{align}

First we prove \eqref{hm:an.proof-1}.
By direct computations and Lemma \ref{hm:lem-1}, we see that
\begin{align}
\sqrt{n}\Psi_{\gam,n} &= 
\overline{A}_{\gam,n} \sqrt{n}\left(S_{\gam,n}-\overline{S}_{\gam,n}\right) 
- \sqrt{n}\left( A_{\gam,n} - \overline{A}_{\gam,n}\right)\overline{S}_{\gam,n} \nn\\
&= \sumi \frac{1}{\sqrt{n}} \left\{ H'_\gam \left( f_i^{\gamma}s_i - \int f(y|\bm{x}_i;\bm{\beta}_0)^{\gamma+1}s(y|\bm{x}_i;\bm{\beta}_0)dy\right) - \left( f_i^{\gamma} - \int f(y|\bm{x}_i;\bm{\beta}_0)^{\gamma+1}dy\right) H''_\gam \right\} \nn\\
&=:\sumi \chi_{\gam,i}.
\nonumber
\end{align}
The sequence $(\chi_{\gam,i})_{i\le n}$ is an $(\mcf_j)$-martingale-difference array.
It is easy to verify the Lapunov condition:
\begin{equation}
\exists \al>0,\quad \sup_{n}\sup_{i\le n} E\left(\left|\chi_{\gam,i}\right|^{2+\al}\right)<\infty.
\nonumber
\end{equation}
Hence the martingale central limit theorem concludes \eqref{hm:an.proof-1} if we show the following convergence of the quadratic characteristic:
\begin{equation}
\frac{1}{n}\sumi E\Big(\chi_{\gam,i}^{\otimes 2}\Big|\mcf_{i-1}\Big) \cip \D_\gam.
\nonumber
\end{equation}
This follows on observing that
\begin{align}
& \frac{1}{n}\sumi E\Big(\chi_{\gam,i}^{\otimes 2}\Big|\mcf_{i-1}\Big) \nn\\
&= (H'_\gam)^2 \frac{1}{n}\sumi \var\big( f_j^{\gam}s_j \big| \mcf_{i-1}\big)
 + (H''_\gam)^{\otimes 2} \frac{1}{n}\sumi \var\big( f_j^{\gam}\big| \mcf_{i-1}\big)
 -2 H'_\gam \frac{1}{n}\sumi \cov\big( f_j^{\gam}s_j, f_j^{\gam}\big| \mcf_{i-1}\big) H''_\gam \nn\\
& = (H'_\gam)^2 \left\{ \iint f(y|\bm{x};\bbeta_0)^{2\gam+1}s(y|\bm{x};\bbeta_0)^{\otimes 2}dy\pi(d\bm{x}) -(H''_\gam)^{\otimes 2}\right\} \nn\\
&{}\qquad 
 + (H''_\gam)^{\otimes 2} \left\{ \iint f(y|\bm{x};\bbeta_0)^{2\gam+1}dy\pi(d\bm{x}) - (H'_\gam)^{2}\right\} \nn\\
&{}\qquad 
 -2 H'_\gam \left\{ \iint f(y|\bm{x};\bbeta_0)^{2\gam+1}s(y|\bm{x};\bbeta_0)dy\pi(d\bm{x}) - H'_\gam H''_\gam \right\} (H''_\gam)^{T}
+o_p(1) \nn\\
&= (H'_\gam)^2 \iint f(y|\bm{x};\bbeta_0)^{2\gam+1}s(y|\bm{x};\bbeta_0)^{\otimes 2}dy\pi(d\bm{x}) \nn\\
&{}\qquad + (H''_\gam)^{\otimes 2} \iint f(y|\bm{x};\bbeta_0)^{2\gam+1}dy\pi(d\bm{x})  -2 H'_\gam \iint f(y|\bm{x};\bbeta_0)^{2\gam+1}s(y|\bm{x};\bbeta_0)dy\pi(d\bm{x}) (H''_\gam)^{T} + o_p(1)\nn\\
&=\D_\gam + o_p(1);
\nonumber
\end{align}
invoke the expression \eqref{hm:add.eq-1} for the last equality.

Next, we show \eqref{hm:an.proof-2}.
Under the present regularity condition, we can deduce that
\begin{equation}
\sup_{\bbeta}| \p_{\bbeta}^2\Psi_{\gam,n}(\bbeta) | = O_p(1).
\nonumber
\end{equation}
It therefore suffices to verify that $-\p_{\bbeta}\Psi_{\gam,n}(\bbeta_0) \cip J_\gam(\bbeta_0)=J_\gam$.
 This follows from a direct computation of $-\p_{\bbeta}\Psi_{\gam,n}(\bbeta_0)$, combined with the applications of Lemma \ref{hm:lem-1}:
\begin{align}
-\p_{\bbeta}\Psi_{\gam,n}(\bbeta_0)
&= A_{\gam,n}\left(\frac{1}{n}\sum_{i=1}^n \int f(y|\bm{x}_i;\bbeta_0)^{\gamma+1}s(y|\bm{x}_i;\bbeta_0)^{\otimes 2}dy\right) 
- \overline{S}_{\gam,n} S_{\gam,n}^{T}
\nn\\
&{}\qquad + \gam\left( S_{\gam,n}\overline{S}_{\gam,n}^{T} - \overline{S}_{\gam,n} S_{\gam,n}^{T} \right) \nn\\
&{}\qquad + \gam\left\{
A_{\gam,n} \left(\frac{1}{n}\sum_{i=1}^n \int f(y|\bm{x}_i;\bbeta_0)^{\gamma+1}s(y|\bm{x}_i;\bbeta_0)^{\otimes 2}dy\right) 
- \overline{A}_{\gam,n} \left(\frac{1}{n}\sum_{i=1}^n  f_i^{\gamma}s_i^{\otimes 2}\right) 
\right\} \nn\\
&{}\qquad + \left\{
A_{\gam,n} \left(\frac{1}{n}\sum_{i=1}^n \int f(y|\bm{x}_i;\bbeta_0)^{\gamma+1}\p_{\bbeta}s(y|\bm{x}_i;\bbeta_0)dy \right) 
-\overline{A}_{\gam,n} \left(\frac{1}{n}\sum_{i=1}^n f_i^{\gamma}\p_{\bbeta}s_i \right)
\right\}
\nn\\
&=
A_{\gam,n}\left(\frac{1}{n}\sum_{i=1}^n \int f(y|\bm{x}_i;\bbeta_0)^{\gamma+1}s(y|\bm{x}_i;\bbeta_0)^{\otimes 2}dy\right) 
- \overline{S}_{\gam,n} S_{\gam,n}^{T} + o_p(1) \nn\\
&= H'_\gam \iint f(y|\bm{x};\bbeta_0)^{\gam+1}s(y|\bm{x};\bbeta_0)^{\otimes 2}dy\pi(d\bm{x}) - (H''_\gam)^{\otimes 2} + o_p(1) \nn\\
&= J_\gam + o_p(1).
\nonumber
\end{align}

\subsection{Consistent estimator of the asymptotic covariance matrix}

Thanks to the stability assumptions on the sequence $\bm{x}_1, \bm{x}_2,\dots$, we have
\begin{equation}
\frac1n \sumi \bm{x}_i^{\otimes k}\exp(- \gam \bm{x}_i^{T}\bbeta_0) \cip \Pi_{k}(\gam), \qquad k=0,1,2.
\nonumber
\end{equation}
Moreover, we can find some constants $\del, \del'>0$ for which
\begin{align}
& \left| \frac1n \sumi \bm{x}_i^{\otimes k}\exp(- \gam \bm{x}_i^{T}\bbeta_0) 
- \frac1n \sumi \bm{x}_i^{\otimes k}\exp(- \gam \bm{x}_i^{T}\hat{\bbeta}_{\gam}) \right| \nn\\
&
\lesssim \left(\frac1n \sumi |\bm{x}_i|^{k+1}\exp\left(\del' |\bm{x}_i|^{1+\del}\right)\right)
\left| \hat{\bbeta}_{\gam}-\bbeta_0\right| = O_p(1)\left| \hat{\bbeta}_{\gam}-\bbeta_0\right| \cip 0.
\nonumber
\end{align}
These observations are enough to conclude \eqref{hm:thm1-av.ce}.

\bibliographystyle{abbrvnat}

\bibliography{paper-ref}

\end{document}